\def\BibTeX{{\rm B\kern-.05em{\sc i\kern-.025em b}\kern-.08em
    T\kern-.1667em\lower.7ex\hbox{E}\kern-.125emX}}
\def\fig{Fig.\xspace}
\def\eqn{Eq.\xspace}
\def\sec{Sec.\xspace}
\def\tab{Tab.\xspace}
\def\eg{{\textit{e.g.}\xspace}}
\def\etc{{\textit{etc}\xspace}}
\newtheorem{proposition}{Proposition}
\newcommand{\head}[1]{{\noindent \textbf{#1:}}}
\newcommand{\term}[1]{{\textit{#1}}}
\def\sysname{\textsc{WuKong}\xspace}
\def\networkname{FreDiT\xspace}
\newcommand{\com}[1]{\textbf{\color{red}(COMMENT: #1)}} 
\newcommand{\todo}[1]{\textbf{{\color{orange}(TODO: #1)}}}
\newcommand{\unused}[1]{{\color{gray}#1}}
\newcommand{\sheng}[1]{\textbf{\color{olive}(Sheng: #1)}} 
\newcommand{\com}[1]{}
\newcommand{\todo}[1]{}
\newcommand{\unused}[1]{}
\newcommand{\sheng}[1]{}
\begin{document}

\title{
``Buy One Get N Free": Neuro-Wideband WiFi Sensing via Self-Conditioned CSI Extrapolation
}

\author{Sijie Ji,~\IEEEmembership{Member,~IEEE}, Weiying Hou,~\IEEEmembership{Student Member,~IEEE}, Chenshu Wu,~\IEEEmembership{Senior Member,~IEEE }

\IEEEauthorblockA{Department of Computer Science,  The University of Hong Kong
}
\\sijieji@caltech.edu, chenshu@cs.hku.hk
}

\markboth{Journal on Selected Areas in Communications,~Vol.~01, No.~6, August~2025}%
{Shell \MakeLowercase{\textit{et al.}}: A Sample Article Using IEEEtran.cls for IEEE Journals}


\maketitle

\begin{abstract}
WiFi sensing has suffered from the limited bandwidths designated for its original communication purpose, leading to fundamental limits in multipath resolution and thus multi-user sensing.
Unfortunately, it is practically prohibitive to obtain large bandwidths on commercial WiFi, considering the conflict between the limited spectrum and the crowded networks. 
In this paper, we present \term{Neuro-Wideband} (NWB), a completely different paradigm that enables wideband WiFi sensing without specialized hardware or extra channel measurements. 
Our key insight is that any physical measurement of channel state information (CSI) inherently encapsulates multipath parameters, which, while unsolvable in isolation, can be transformed into an expanded form of CSI (\term{eCSI}) approximating measurements over a broader bandwidth. To ground this insight,  
we propose \sysname to address NWB as a unique self-conditioned learning problem that can be trained by using any existing CSI data as self-labeled samples. 
\sysname introduces a novel deep learning framework by integrating Transformer and Diffusion models, which captures sample-specific multipath parameters and transfers this sample-level knowledge to the outcome eCSI. 
We conduct real-world experiments to evaluate \sysname on diverse WiFi signals across protocols and bandwidths. The results show the promising effectiveness of NWB, which is further demonstrated through case studies on localization and multi-person breathing monitoring using eCSI. 
Overall, the proposed NWB promises a practical pathway toward realizing wideband WiFi sensing on commodity hardware, expanding the design space of wireless sensing systems.
\end{abstract}

\begin{IEEEkeywords}
WiFi sensing, CSI, learning-enabled wireless sensing, commodity WiFi systems, multipath resolution
\end{IEEEkeywords}


\begin{figure}[t]
  \centering
  \includegraphics[width=0.9\linewidth]{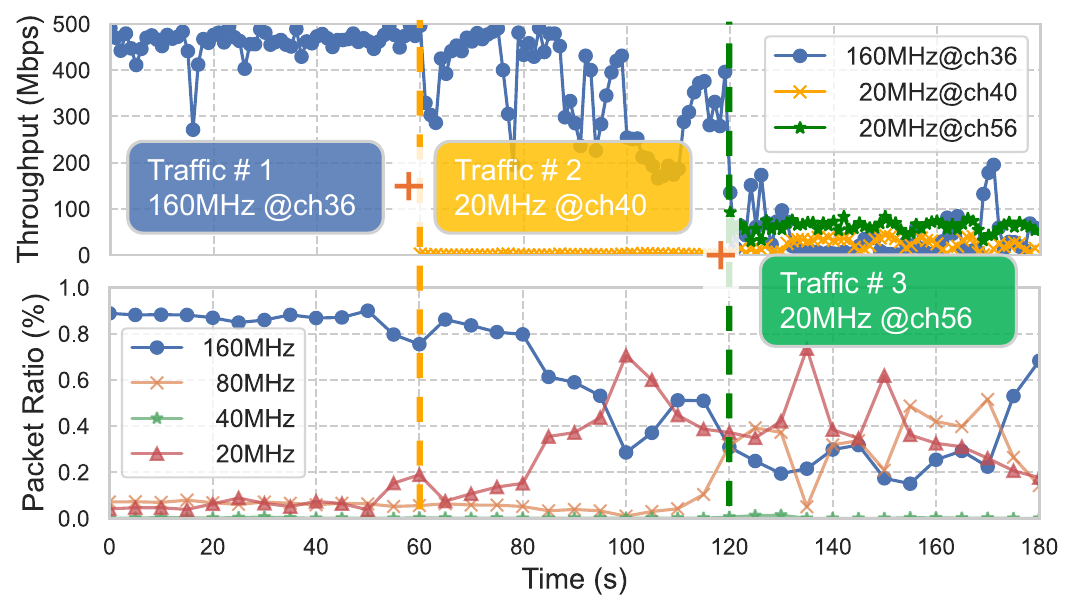}
  \vspace{-0.9\baselineskip}
  \caption{Limited WiFi bandwidth: 160MHz traffic (ch36) will degrade to 80/40/20 MHz bandwidths due to interference from overlapping 20MHz channels (ch40, ch56).}
  \label{fig:interference}
\end{figure}

\section{Introduction}
\label{sec:intro}

Over the past decades, Radio Frequency (RF) technology has expanded far beyond its original purpose of communication, enabling a wide array of sensing applications. However, today’s widely deployed WiFi networks typically operate with relatively narrowband signals, constrained by requirements such as spectral efficiency, interference management, power consumption, and regulatory compliance. While sufficient for communication, these bandwidth limitations pose fundamental challenges for RF sensing—directly impacting range resolution and the ability to resolve multipath.

Even with modern WiFi standards like IEEE 802.11ac/ax supporting up to 160MHz bandwidth, the time resolution is limited to 6.25 ns, corresponding to a coarse range resolution of 1.875 meters. In practice, real-world deployments rarely utilize 160MHz channels: our measurements on campus WiFi networks (see \fig\ref{fig:interference}) show that fewer than 1\% of public access points operate at 160MHz\footnote{We sniffed beacon frames in multiple scenarios and observed <1\% of APs using 160MHz.}, due to spectrum congestion and dynamic fallback to narrower bands. As a result, most commodity WiFi devices operate in the 20–80MHz range, offering significantly poorer sensing resolution.

While WiFi sensing has achieved remarkable progress, bandwidth remains a fundamental bottleneck—limiting fine-grained tasks like precise localization and multi-target tracking.

Several approaches have been proposed to address this limitation. A direct method is to perform frequency hopping across adjacent channels and stitch the resulting Channel State Information (CSI) to approximate wideband measurements~\cite{vasisht2016decimeter,xie2015precise,xiong2015tonetrack,li2024uwb}. However, frequency hopping is practically infeasible: it introduces latency, fails to guarantee channel continuity, and often violates the coherence time constraint—especially on commodity WiFi devices (detailed in \S\ref{sec:background}). While future WiFi standards may support broader bands, actual bandwidth availability remains highly questionable given the limited spectrum between 2.4–7.1GHz and increasingly dense deployments.

Recent work has explored using deep learning to synthesize or augment CSI~\cite{chi2024rf,zhao2023nerf,li2024uwb}. These methods can enrich training data but typically do not produce physically consistent or continuous wideband CSI. For instance, UWB-FI~\cite{li2024uwb} samples CSI across non-contiguous bands via hardware hopping and learns AoA–ToF tuples from them, but the resulting data lacks continuity and often requires specialized devices.

To the best of our knowledge, no prior work can produce truly \textit{continuous, physically meaningful wideband CSI} on commodity WiFi hardware.

In this paper, we propose a fundamentally different paradigm, called \textbf{Neuro-Wideband (NWB)}, that enables the extrapolation of wideband CSI (\textbf{eCSI}) from a single measurement on a standard narrowband channel—\textit{without} hardware modification, additional probing, or inter-channel stitching.

Our core insight builds on the observation that at a fixed time and location, the underlying multipath parameters are physically constant across frequencies. While difficult to extract directly, these parameters leave implicit signatures in the CSI measured at any frequency. Therefore, given a narrowband CSI sample, it should be possible to infer CSI at unmeasured frequencies by learning how these underlying parameters manifest across frequency variations. In effect, NWB extrapolates the CSI across frequencies by transferring this latent multipath structure.

We present \textbf{\sysname}, the first system that realizes the NWB paradigm through a carefully designed deep learning framework. \sysname overcomes multiple technical challenges to generate high-fidelity eCSI from any narrowband input. It introduces the following key innovations:

\begin{itemize}
  \item \textbf{Self-conditioned Extrapolation.} We formulate NWB as a self-conditioned learning problem, where CSI at arbitrary input bandwidths (e.g., 20/40/80MHz) is used to predict CSI at larger target bandwidths. Our formulation ensures that eCSI remains physically meaningful by conditioning predictions on real measured data.

  \item \textbf{Self-supervised Training without Wideband Labels.} \sysname eliminates the need for wideband ground-truth labels by using random sub-bands of existing CSI as inputs and the full band as the supervision. This allows training on any available CSI data—public or in-house—without requiring wideband measurements.

  \item \textbf{Sample-specific Multipath Inference.} Unlike typical deep learning tasks that learn population-level statistics, NWB aims to infer sample-specific multipath structure. To solve this, \sysname integrates a frequency-aware Transformer with self-conditioned diffusion modeling. This architecture allows the model to internalize multipath patterns and transfer them across frequency bands.

\end{itemize}

\sysname offers several desirable properties: it handles arbitrary CSI input lengths and bandwidths; generalizes across hardware and protocols; produces seamless, continuous eCSI; and adapts to complex multipath conditions in dynamic environments.

We fully implement and evaluate \sysname using both public and self-collected datasets across diverse scenarios. Our results show that \sysname achieves high-fidelity eCSI extrapolation (e.g., from 20→160MHz), with strong alignment to measured 160MHz ground-truth CSI. Moreover, case studies on localization and multi-user breathing estimation demonstrate the utility of NWB sensing in real-world applications. We believe NWB opens up a promising direction for bandwidth-limited RF sensing, extending the capabilities of commodity WiFi through frequency extrapolation.

Our contributions are summarized as follows:
\begin{itemize}
    \item We introduce \term{Neuro-Wideband}  (NWB) for the first time, an effective and promising paradigm towards ultrawideband sensing using commodity WiFi signals, with zero extra costs in terms of additional channel measurements or specialized hardware. 
    \item We formulate NWB as a self-conditioned deep learning problem that extrapolates a physical CSI measurement of narrow bandwidth to an extended eCSI of broader bandwidth, ensuring the expanded eCSI is physically evidenced by the input and allowing any existing CSI data to serve as self-labeled training samples. 
    \item We present \sysname with a novel deep learning framework by integrating Transformer with Diffusion models, which can transfer sample-level self-conditioned knowledge to the learning outcome, setting it apart from conventional learning networks that learn the overall statistical distribution of a given dataset. 
    \item We conduct real-world experiments to evaluate \sysname across multiple WiFi protocols, bandwidths, and hardware platforms, and demonstrate the effectiveness of NWB through case studies on eCSI sensing for localization and multi-person breathing monitoring. 
\end{itemize}

\section{Background and Context}
\label{sec:background}

\subsection{The Wireless Channel and CSI}
Define $\Omega$ as the space of multi-path environmental parameters, which describes how the signal changes as it propagates from transmitter to receiver. $\boldsymbol\Omega$ is characterized by various physical phenomena such as reflection, refraction, diffraction, scattering, \etc, in addition to hardware factors. It contains all the parameters of physical environment that determine the final CSI, such as multipath number $L$, path gain $\boldsymbol{\alpha}$, path delay $\boldsymbol{\tau}$, path angle $\boldsymbol{\theta}$. Define $\mathcal F$ as the space of frequency, which includes continuous frequency points within the frequency band of interest. CSI is determined by physical channel information $\boldsymbol\Omega$ and the transmission frequency $\mathcal F$, which can be represented by the following function mapping:
\begin{equation}
\mathcal H:\boldsymbol\Omega \times \mathcal F\rightarrow \boldsymbol H.
\end{equation}
Given the physical parameters $\Omega_i$$=\left\{L,\left\{\tau_l,\alpha_l,\theta_l\right\}\right\}$$\in\boldsymbol\Omega$ of the environment and a specific frequency $f_i\in\mathcal F$,
the wireless channel can be expressed as $\boldsymbol H=\mathcal H(\Omega_i,f_i)$,
which is given by
\begin{equation}
\mathcal{H}(\Omega_i,f_i)=\sum_{\ell=1}^L\left|\alpha_{\ell}\right| e^{j \angle\alpha_{\ell}} e^{-j 2 \pi f_i\tau_{\ell}}.
\end{equation}
Considering a uniform linear array with half-wavelength inter-antenna spacing, the received channel on the $n$-th antenna can be denoted as:
\begin{equation}
\label{eqn:csi}
\mathcal{H}_{n}(\Omega_i,f_i)=\sum_{\ell=1}^L
(\left|\alpha_{\ell}\right| e^{j \angle\alpha_{\ell}} e^{-j 2 \pi f_i \tau_{\ell}}) e^{-j 2 \pi \frac{n\cos\theta_{\ell}}{2}}.
\end{equation}

\subsection{Limitations of Existing Arts}
Let $\boldsymbol H^s=\left\{\mathcal H (\Omega_i,f_{i,b}),\forall f_{i,b}\in\Pi(f_i)\right\}$ denote the sampled narrowband CSI, representing channel data observed at frequency points $\Pi(f_i)$ within the specified narrow bandwidth. Additionally, let $\boldsymbol H^e=\left\{\mathcal H (\Omega_i,f_{i,b}),\forall f_{i,b}\in\Pi^e(f_i,k)\right\}$ denote the extended \term{eCSI} over a broader bandwidth, containing channel data extrapolated by a factor of $k$ from the measured narrowband CSI $\boldsymbol H^s$  at the same measurement time. Here, $\Pi(f_i)$ represents the continuous subcarriers of the narrowband CSI $\boldsymbol H^s$ centered at frequency $f_i$, while $\Pi^e(f_i,k)$ denotes the extended subcarriers of $\Pi(f_i)$ by a factor of $k$, also centered at $f_i$.
The goal of \sysname is to leverage the current physical CSI, $\boldsymbol{H}^s$, as a condition to learn an extrapolation function such that:
\begin{equation}
 \Psi:\boldsymbol H^s\rightarrow \boldsymbol H^e.
\end{equation}
The extrapolation function $\Psi$ ultimately outputs an \term{eCSI}, which features a $k$ times wider bandwidth across a continuous frequency band $\Pi^e(f_i,k)$, compared with the original CSI. 
\sysname is designed to take physical CSI of arbitrary bandwidths as input and, in principle, supports any extrapolation factor $k$. 
We expect the obtained \term{eCSI} to be continuous across the target band as the prediction is done as a whole over all the entire frequency band rather than on individual subcarriers. 
The \term{eCSI} shall also be adaptive to and independent of diverse impacting factors such as environments, locations, devices, \etc, since each \term{eCSI} prediction is conditioned on the physical CSI that has already captured these domain-specific parameters in $\Omega_i$.

It is important to note that our goal towards NWB differs significantly from previous efforts for achieving wideband WiFi sensing, as detailed in the following:

\head{(1) Channel Hopping}
Channel hopping aims to swiftly scan through different channels and stitch the CSI measurements across channels to form a wideband WiFi signal such that: 
$$\left\{\mathcal{H}\left(\Omega_1,f_1 \right),... , \mathcal{H}\left(\Omega_n,f_n\right)\right\} ,$$ where $n$ is the total number of the scanned channels. 
Channel hopping faces two major issues. 
First, stitching the multi-channel measurements is extremely challenging due to the non-contiguous nature of different frequency bands and the intricate different sources of errors~\cite{vasisht2016decimeter,xiong2015tonetrack}, particularly because the measurements have to be captured at different times, across which the channel may have already changed. This results in the physical environment characteristics corresponding to the channels from different frequency bands being different, i.e., $\Omega$ varies across different frequency bands.
Second, channel hopping methods inevitably impose extra overhead that would impacting normal communication~\cite{xie2019md}. 

\head{(2) Channel Mapping}
Channel mapping in wireless communications involves predicting the CSI at one frequency  based on the CSI from another frequency such that:
$$\mathcal{H}\left(\Omega_i,f_i \right) \rightarrow \mathcal{H}\left(\Omega_i,f_j\right).$$
This process is crucial especially for frequency division duplex (FDD) communication systems, where the uplink and downlink channel no longer exhibit reciprocity~\cite{guo2024deep,liu_fire_2021,vasisht_eliminating_2016}. Techniques for channel mapping often leverage statistical models, machine learning algorithms, or data-driven approaches to predict how the CSI changes over fixed frequency interval. 

Regretablly, existing channel mapping solutions do not address the problem of NWB. 
First, notable efforts, such as OptML~\cite{bakshi2019fast} and R2F2~\cite{vasisht_eliminating_2016}, attempt to estimate complicated multipath parameters (\eg, angle of arrival, time of flight, attenuation, \etc, as in \eqn\eqref{eqn:csi}) for prediction, which is extremely challenging, if possible, due to the poor multipath resolutions and the highly unpredictable multipath propagations. Rather, \sysname aims to transfer the knowledge of the multipath parameters, rather than resolving them. Second,the channel mapping solutions are not physics-informed, facing challenges in adapting to different physical environments. Third, existing channel mapping is only applicable to predict channel at fixed frequency interval and cannot simultaneously predict channels over a continuous frequency band. 

\head{(3) Generative AI}
Recently, generative AI approaches~\cite{zhao2023nerf,liu_fire_2021,chi2024rf} have been explored to generate CSI samples by learning from an underlying distribution of a given dataset, which can be formulated as:
$$\left\{\mathcal H(\Omega_n,f_n)\right\}^{N} \rightarrow \left\{\boldsymbol\Omega,\mathcal F \right\}\rightarrow  \left\{\mathcal{H} (\Omega_i,f_i)\right\},$$
where $N$ is the number of CSI samples used for training. 

Generative approaches circumvent the intractable parameter estimation by using data to construct a latent space to implicitly represent the parameter distributions, with the belief that the latent space is inclusive enough for generating multipath profiles. 
The biggest drawback of generative approaches is that the network can only generate channel data that matches the statistical characteristics of the training samples. When the training samples contain channel information from a specific frequency band, the generative network is restricted to producing CSI data corresponding to that frequency band, without the ability to generalize to a wider frequency band for NWB sensing. Furthermore, the generative network can only generate data with the same physical properties as the training samples, making it difficult to adapt to different physical environments.

Overall, as far as we are aware, no prior works can obtain continuous wideband CSI without using specialized hardware or performing extra multi-channel measurements. 
\sysname achieves this by presenting a novel paradigm of NWB and formulating it as a self-conditioned learning problem that extrapolates a physical CSI measurement to a novel form of \term{eCSI} with a broader continuous bandwidth.

\subsection{Rationale of NWB}
\label{sec:overview}

Recalling \eqn\eqref{eqn:csi}, it is extremely difficult to resolve the parameters of all the multipath components, especially considering the highly unpredictable environmental dynamics and diverse hardware noises. 
While remarkable advances have been achieved in wireless sensing without large bandwidths to accurately resolve the multipath parameters, \sysname takes a fundamentally different approach towards NWB sensing.
\fig~\ref{f:comparison} illustrates the high-level idea. Unlike previous works that attempt to estimate the parameters for individual CSI entries or learn a representative model to approximate them, \sysname leverages the unknown physical parameter set associated with each CSI entry as an underlying constraint for extrapolating CSI to a larger bandwidth. 
The opportunity lies in the fact that, at any given time moment, the measured CSI $\boldsymbol{H}$ at one frequency implies a specific unknown multipath parameter set $\Omega_{\boldsymbol{H}}$, which is in principle shared by channels of different frequencies, within the same space and at the same time, and thus can be employed as an unknown yet fixed condition transferable across different subcarriers for forming $\boldsymbol{H}^e$.

Although CSI is derived from \eqn\eqref{eqn:csi} for a specific frequency, the multipath parameters are frequency-independent. Thanks to the Orthogonal Frequency Division Multiplexing (OFDM) modulation in WiFi, at any time, the observed CSI $\boldsymbol{H}^s$ at different carrier frequencies shares the same multipath parameters, $\boldsymbol{\Omega}$.
Thus, each CSI measurement provides diverse frequency observations with the same multipath information. Therefore, through many CSI samples from different conditions, it is possible to map across different frequencies and further model the self-conditioned CSI extrapolation. 
\begin{proposition}
\label{pos:1}
Given the narrowband CSI $\boldsymbol{H}^s$ observed over a narrow bandwidth $\Pi(f_i)$, and a continuous frequency band $\Pi^e(f_i,k)$ expanded by a factor of $k$, there exists a deterministic mapping for CSI extrapolation:
\begin{equation}
\label{eqn:phi}
\begin{aligned}
\Phi_{\Pi(f_i) \rightarrow \Pi^e(f_i, k)}:
\boldsymbol{H}^s &= \left\{ \mathcal{H} (\Omega_i, f_{i,b}), \forall f_{i,b} \in \Pi(f_i) \right\} \\
\rightarrow \boldsymbol{H}^e &= \left\{ \mathcal{H}(\Omega_i, f_{i,b}), \forall f_{i,b} \in \Pi^e(f_i, k) \right\},
\end{aligned}
\end{equation}

that is uniquely determined by the parameters of the underlying physical environment $\Omega_i$ (including the geometry, materials, antenna positions, etc.).
\end{proposition}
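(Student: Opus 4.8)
The plan is to exhibit $\Phi$ explicitly as a composition of two deterministic sampling operators, both parametrized by the single shared multipath set $\Omega_i$, and to reduce the existence claim to the frequency-independence of $\Omega_i$ that was argued in \S\ref{sec:overview}.

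First, I would isolate the role of frequency. From \eqn\eqref{eqn:csi}, once $\Omega_i=\{L,\{\tau_\ell,\alpha_\ell,\theta_\ell\}\}$ is fixed, the per-antenna channel $\mathcal{H}_n(\Omega_i,f)$ is a closed-form, deterministic function of the frequency argument $f$ alone---a finite sum of complex exponentials whose rates are the delays $\tau_\ell$ and whose angular factors are the $\theta_\ell$. Evaluating this function on the fixed, known grid $\Pi(f_i)$ yields $\boldsymbol{H}^s=\mathcal{S}_{\Pi(f_i)}(\Omega_i)$, while evaluating it on the fixed, known grid $\Pi^e(f_i,k)$ yields $\boldsymbol{H}^e=\mathcal{E}_{\Pi^e(f_i,k)}(\Omega_i)$. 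Both operators $\mathcal{S}$ and $\mathcal{E}$ are deterministic, and the only source of variability in either output is $\Omega_i$.

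Second, I would invoke the physical fact established earlier: at a fixed time and location the multipath parameters are frequency-independent, so the \emph{same} $\Omega_i$ indexes both $\mathcal{S}$ and $\mathcal{E}$. The narrowband and wideband observations are therefore two samplings of one and the same parametrized function $\mathcal{H}(\Omega_i,\cdot)$. This lets me define $\Phi=\mathcal{E}_{\Pi^e}\circ\mathcal{S}_{\Pi}^{-1}$ on the range of $\mathcal{S}_\Pi$: for any observed $\boldsymbol{H}^s$ there is a generating $\Omega_i$ with $\mathcal{S}_\Pi(\Omega_i)=\boldsymbol{H}^s$, and applying $\mathcal{E}_{\Pi^e}$ to that same $\Omega_i$ returns $\boldsymbol{H}^e$. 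The resulting map is deterministic because both operators are, and it is ``uniquely determined by $\Omega_i$'' in the sense that $\Phi(\boldsymbol{H}^s)$ depends on its input only through the physical parameters that generated it.

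The hard part is well-definedness of $\mathcal{S}_\Pi^{-1}$---showing $\Phi$ is single-valued as a function of $\boldsymbol{H}^s$ rather than of $\Omega_i$---which is exactly the multipath-identifiability issue that motivates this paper: a narrowband observation can in principle be consistent with several distinct parameter sets, and if two such sets disagreed on the extended band the mapping would be ill-posed. I would handle this by exploiting the harmonic-retrieval (Vandermonde) structure of \eqn\eqref{eqn:csi} in the joint subcarrier--antenna index: because there are only finitely many paths $L$, a sufficient number of samples in $\Pi(f_i)$ across the antenna array pins down $\{\tau_\ell,\theta_\ell,\alpha_\ell\}$ up to path ordering, after which the entire (analytic) exponential sum $\mathcal{H}(\Omega_i,\cdot)$ is fixed on all of $\mathcal{F}$ by analytic continuation, including on $\Pi^e(f_i,k)$. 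Where the narrowband sampling budget falls below the identifiability threshold, I would fall back to the weaker but sufficient statement that $\Phi$ is well-defined on equivalence classes of $\Omega_i$ that agree on $\Pi^e(f_i,k)$---precisely the deterministic target that \sysname is subsequently trained to approximate.
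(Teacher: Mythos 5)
Your proposal rests on the same physical premise as the paper's proof---that the multipath parameter set $\Omega_i$ is frequency-independent and, once fixed, determines $\mathcal{H}(\Omega_i,\cdot)$ in closed form at every frequency---but it goes substantially further. The paper's own argument stops after asserting that the mapping is ``exclusively determined by $\Omega_i$,'' which establishes only that $\boldsymbol{H}^s$ and $\boldsymbol{H}^e$ are two samplings of one parametrized function; it never confronts whether $\Phi$ is single-valued \emph{as a function of the observed} $\boldsymbol{H}^s$, which is what the proposition literally claims. You correctly isolate this as the hard part: if two distinct parameter sets were consistent with the same narrowband samples yet disagreed on $\Pi^e(f_i,k)$, the map $\mathcal{E}_{\Pi^e}\circ\mathcal{S}_{\Pi}^{-1}$ would be ill-posed. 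Your resolution via the finite-$L$ harmonic-retrieval (Prony/Vandermonde) structure, followed by extension of the resulting exponential sum to all of $\mathcal{F}$, is the right tool and buys a genuinely stronger statement than the paper proves---at the cost of hypotheses the paper never states (enough subcarriers relative to $2L$, distinct delays, and delays within the unambiguous range set by the subcarrier spacing, since $e^{-j2\pi f\tau_\ell}$ sampled on a grid only fixes $\tau_\ell$ modulo the reciprocal spacing). One caution on your fallback: defining $\Phi$ on equivalence classes of $\Omega_i$ that agree on $\Pi^e(f_i,k)$ does not by itself restore single-valuedness in $\boldsymbol{H}^s$, because two parameter sets in different classes could still collide on $\Pi(f_i)$; that weaker statement matches what the paper actually argues (a map determined by $\Omega_i$, not by $\boldsymbol{H}^s$) rather than what the proposition asserts. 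In short, your proof is correct under your stated identifiability conditions and is more careful than the paper's; the paper's version is best read as motivating the learning formulation rather than as a complete proof of the deterministic mapping.
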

\begin{proof}
Based on the \eqn\eqref{eqn:csi}, the distinct physical parameters, such as ($\alpha_\ell, \tau_\ell, \theta_\ell $), are sufficient to fully define the wireless channels on any given frequency $f_i$ for each signal propagation path. None of these parameters depend on the channel frequency. At a specific time, the propagation environment remains unchanged across the continuous frequency bands $\Pi$ and $Pi^e$, which means the mapping $\Phi_{\Pi(f_i) \rightarrow \Pi^e(f_i, k)}$ is exclusively determined by the specific physical parameters $\Omega_i$. Different physical environments lead to different mapping functions.
\end{proof}
\begin{figure}[t]
       \subfloat[Exsiting practice]{
  \includegraphics[width=0.46\linewidth]{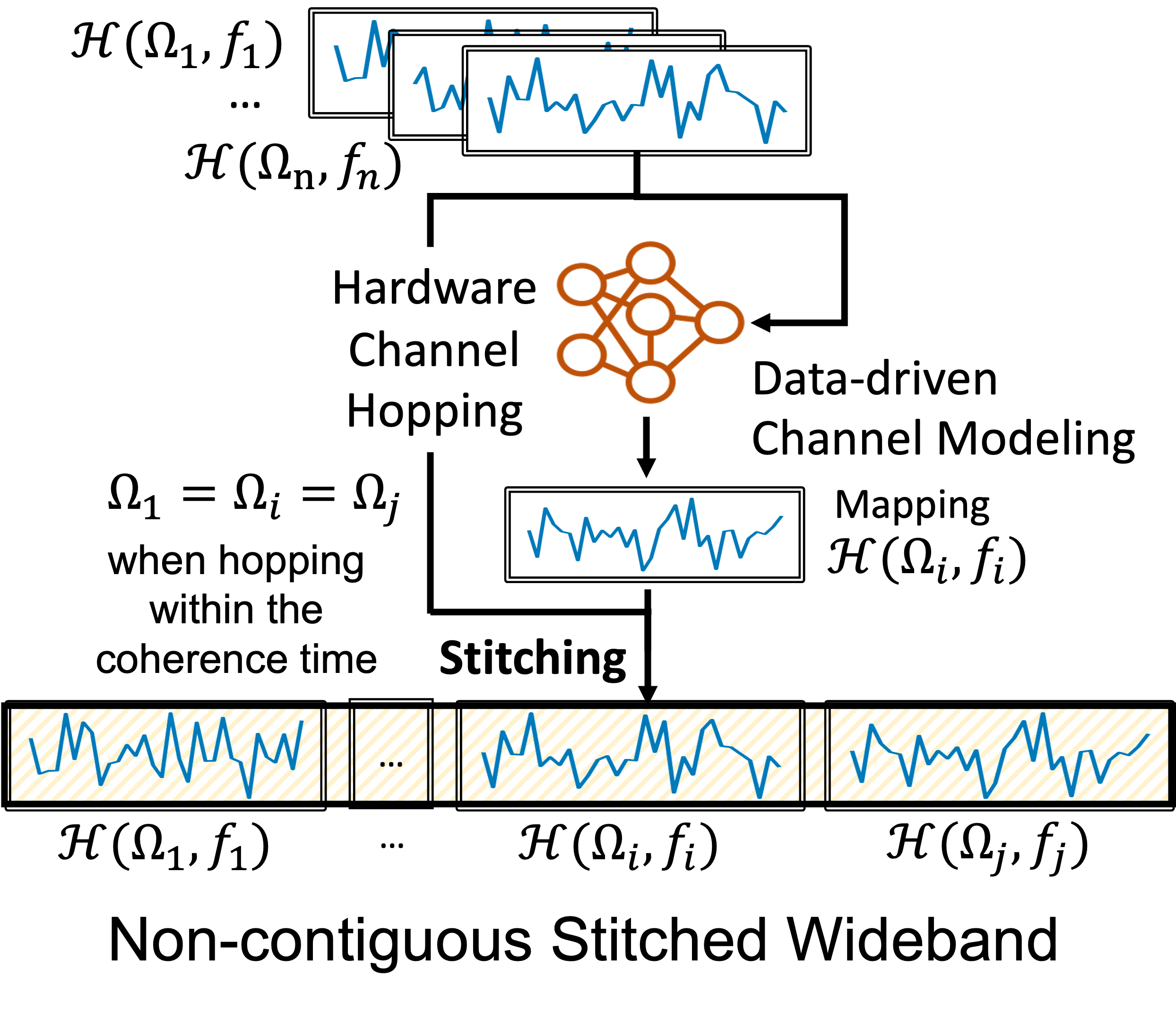}}
  \quad
    \subfloat[\sysname]{
    \label{subfig:ours}
  \includegraphics[width=0.46\linewidth]{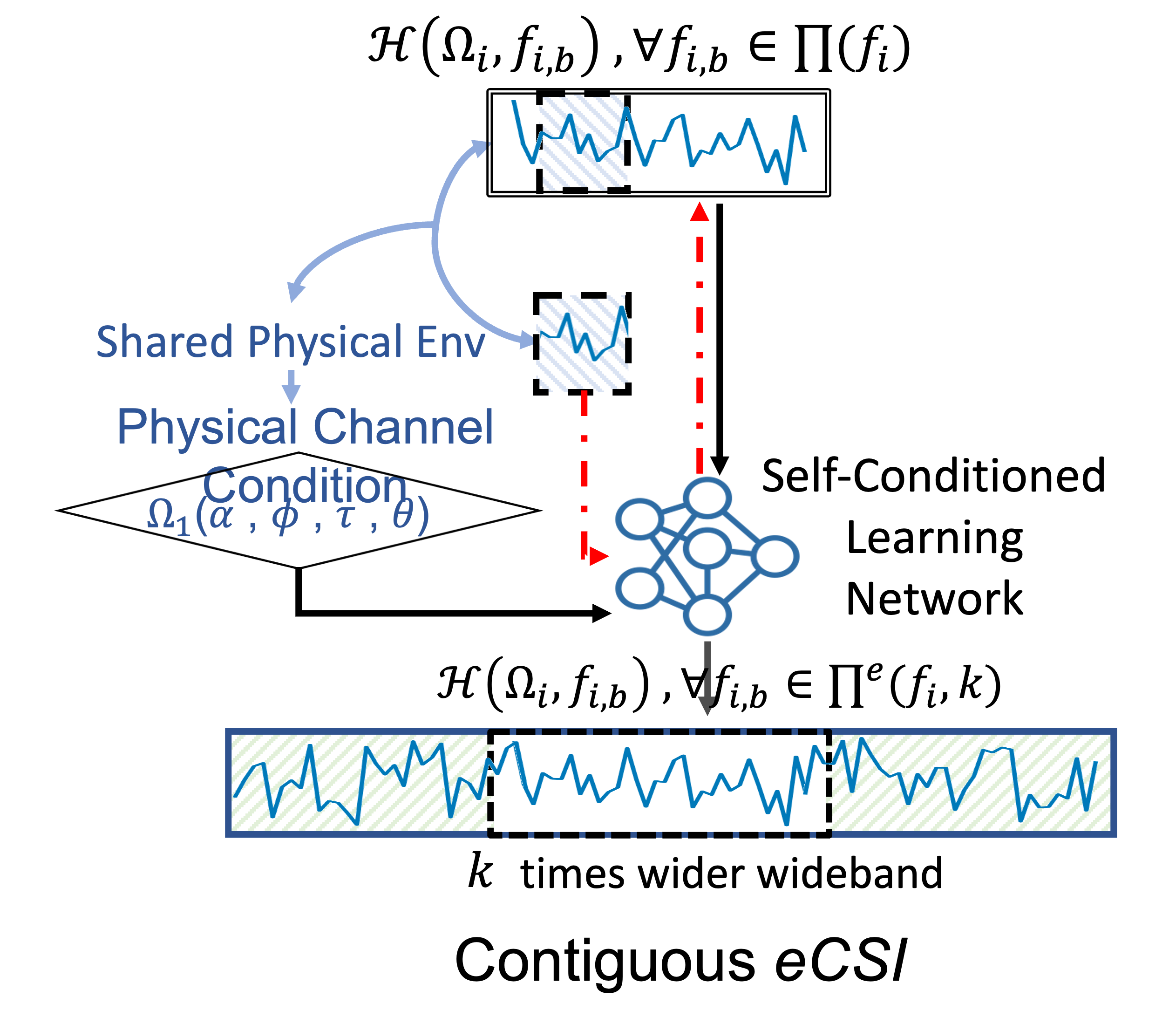}}
  \quad
    \caption{(i) Existing approaches assume a determined multipath mapping from current(extensive) data samples, whereas \sysname learns to extrapolate the multipath information inherent in the sub-samples of each sample (self-conditioned). (ii) Unlike conventional methods rely on stitching to achieve large bandwidth, \sysname attains substantial bandwidth directly.}\label{f:comparison}
     \vspace{-3mm}
\end{figure}
As shown by the proposition above, the CSI extrapolation mapping function is closely related to the physical parameters of the channel. Leveraging the deep learning method to capture these underlying physical parameters from the sampled channel, and subsequently inferring the mapping function based on the physical characteristics, it is possible to successfully construct the desired mapping. The whole mapping learning process can be characterized by the following probability model:
\begin{equation}
\boldsymbol{H}^s\xrightarrow[\mathbb P(\Omega_i |\boldsymbol H^s)]{} \Omega_i\xrightarrow[\mathbb P(\boldsymbol H^e |\Omega_i)]{} \boldsymbol{H}^e,
\end{equation}
where $\mathbb P(\Omega_i |\boldsymbol H^s)$ denotes the process of physical parameter extraction from the measured narrowband CSI, $\mathbb P(\boldsymbol H^e |\Omega_i)$ denotes the process of broadband \term{eCSI} construction from  the underlying multipath parameters. The whole process of NWB can be viewed as the calculation of the following posterior distribution: 
\begin{equation}
\mathbb P(\boldsymbol H^e |\boldsymbol H^s)=\int_{\Omega_i} \mathbb P(\boldsymbol H^e|\Omega_i)\mathbb P (\Omega_i|\boldsymbol H^s),
\end{equation}
which represents the belief about the channel response at the unobserved frequencies conditioned on the measured CSI.

\head{Summary} \sysname aims to derive a deep learning function $\mathcal{G}$, which takes the current CSI, $\boldsymbol{H}^s$, as a condition to extrapolate the channel response at other unmeasured frequencies, $\boldsymbol{H}^e$, such that:
\begin{align*}
\text{argmin}_{\Theta} \quad & \mathbb E \left[ \left\|\mathcal{G}(\boldsymbol{H}^s, \boldsymbol{\Theta})- \Phi_{\Pi(f_i) \rightarrow \Pi^e(f_i, k)}(\boldsymbol{H}^s)\right\|^2\right] \\
\text{subject to} \quad & \Phi_{\Pi(f_i) \rightarrow \Pi^e(f_i, k)} \mathrm{\,in\,\eqn\eqref{eqn:phi}}, \forall\Omega,k.
\end{align*}

\begin{figure*}[t!]
  \centering
  \includegraphics[width=0.9\linewidth]{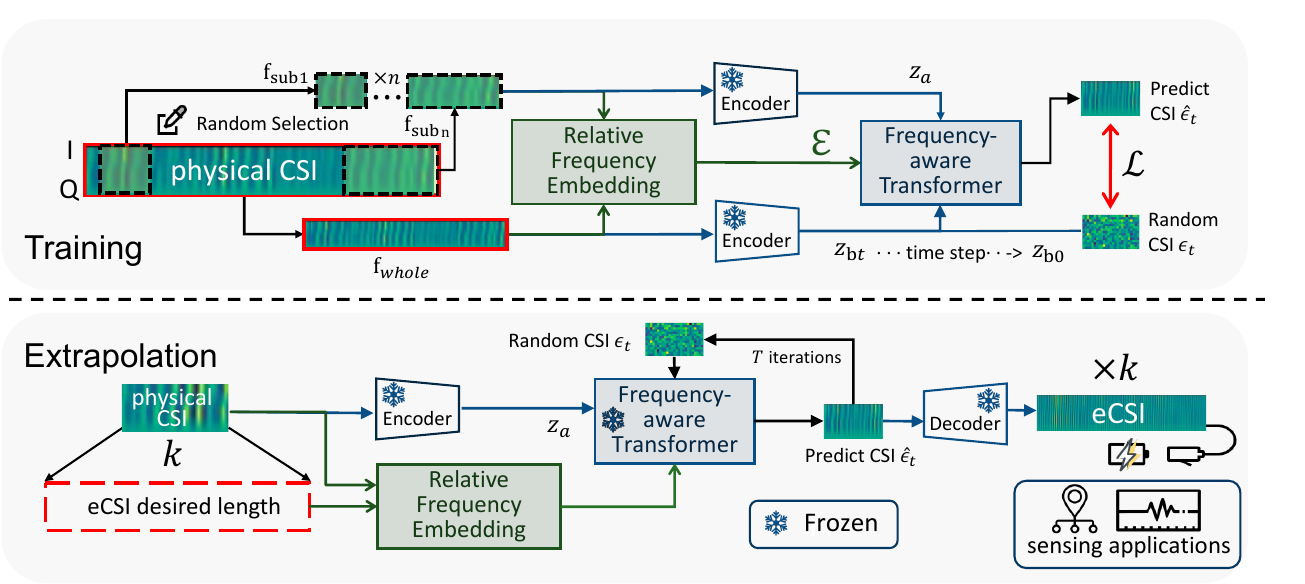}
  \caption{Architecture of \networkname.}
  \label{f:network}
\end{figure*}

\vspace{-0.1in}
\section{\sysname Design}
\label{sec:design}

Although we have formulated the NWB into a novel self-conditioned learning problem, developing a solution with deep neural networks remains challenging; a specialized and innovative framework is necessary. Specifically, 
$\mathcal{G}$ must satisfy the following conditions:

(1) It must be able to accept inputs of any length and produce outputs of any length as specified by the configuration \( k \). As a contrast, existing deep neural frameworks can accept inputs of any length but produce outputs of fixed length according to the input.

(2) The network needs to maintain the unique multipath profiles of each sample and reflect these in the \term{eCSI}, whereas existing deep neural frameworks aim to model an optimal multipath distribution that capable of representing a large number of samples.

\fig~\ref{f:network} illustrates the proposed novel deep neural framework, \networkname, which consists of four major components: 

(1) Relative Frequency Embedding: This method calculates the relationships between subcarrier segments of varying lengths through a relative frequency encoding scheme. The embedding is used as a supplement to the input, effectively addressing the issue of varying sample lengths and enabling the extrapolation of CSI to any frequency position during extrapolation. 

(2) Self-conditioned Diffusion: The reverse diffusion process uses the already generated parts of the data as conditions to predict the next most probable data, thus precisely controlling the generation process, which can be considered as a self-conditioned process. \networkname thus innovatively manipulates the reverse diffusion process, allowing the measured CSI to guide the process and deliver sample-specific knowledge to the extrapolated \term{eCSI}.

(3) Frequency-aware Transformer Model: This model uses a transformer as the backbone and incorporates a frequency-aware attention mechanism. This mechanism integrates inputs of various lengths with RF embeddings to effectively model the positional relationships at different scales within the measured CSI. In this context, 'frequency' refers to the carrier frequency of the CSI, not the frequency characteristic of the batch of sample data.

(4) One-step CSI Extroplation: 
At the extrapolation stage, the frequency-aware transformer is frozen. \networkname generates a virtual container with
arbitrary expansion multiples $k$ based on the relative frequency embedding.

This section details the four key components of \sysname.  

\subsection{Relative Frequency Embedding}
%
\begin{figure}[t!]
  \centering
  \includegraphics[width=0.98\linewidth]{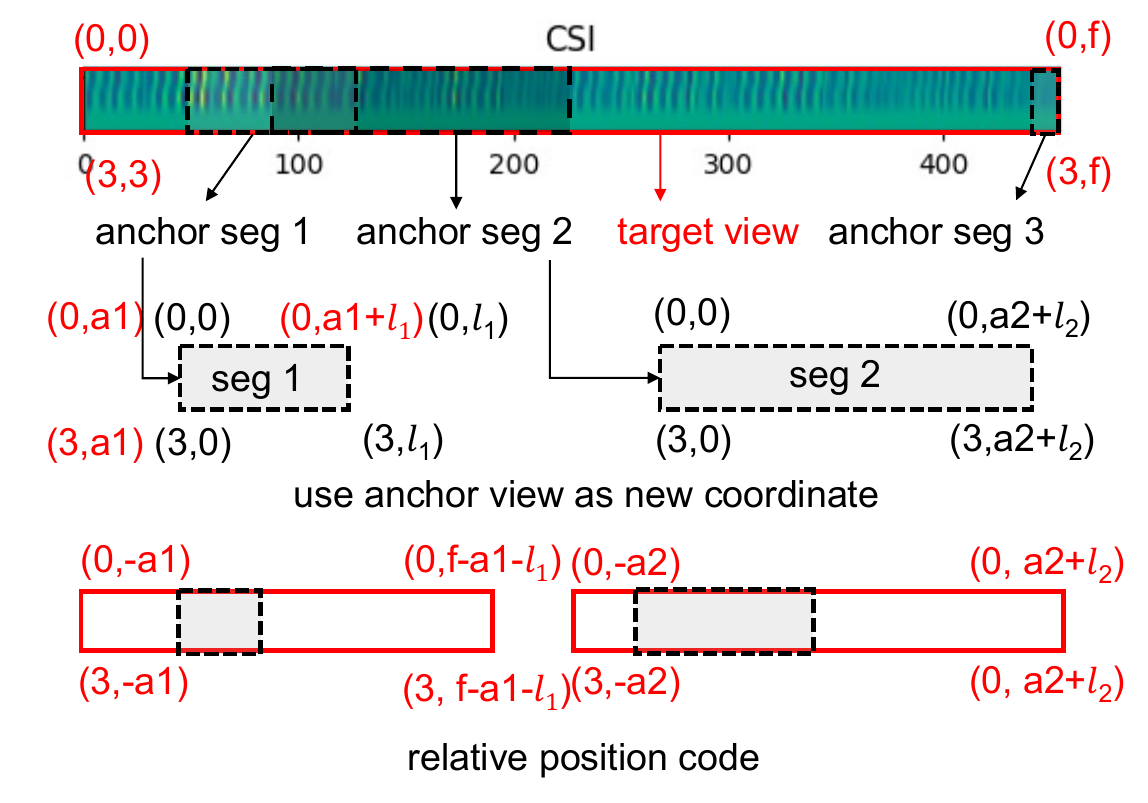}
  \caption{Detailed mechanism of relative frequency embedding (RFE) of \sysname.}
  \label{f:rfe}
\end{figure}

One of the challenges in modeling the relationship between different frequencies is that WiFi has different subcarrier spacing depending on the protocol and the configured channel bandwidth. To tackle the challenge and to make the learning procedure more elastic, \networkname designs a Relative Frequency Embedding (RFE) module. The RFE aims to represent any positional and frequency relationship between the smaller sub-band CSI and the wider band CSI so that in the extrapolation stage, the \networkname can extrapolate CSI in controllable and continuous multiples $k$ in one step.  

Particularly, each input CSI is organized as $\boldsymbol{H} \in \mathbb{R}^{f\times 3}$, where $f$ is the length of CSI and each CSI includes the real and imaginary parts of the CSI along with the ground truth subcarrier frequency value stacked as a dimension.
During training, for each CSI sample, the RFE module initiates by randomly selecting $n$ subsegment of $\boldsymbol{H}$, i.e., $\boldsymbol{\Tilde{H}} \in \mathbb{R}^{\Tilde{f} \times 3}$. The length ,$\Tilde{f}$, of each $\boldsymbol{\Tilde{H}}$ from the subsegment is randomly chosen to be between $[0.05,0.50]$ times the length of $f$. 

As we aim for \networkname to eventually accept physically measured CSI of arbitrary length as input, we have strategically established coordinates based on the randomly selected subsegment $\boldsymbol{\Tilde{H}}$, designated as the anchor view. We then calculate the relative coordinates of the entire physical CSI,  $\boldsymbol{H}$, designated as the observed view.
This means that for each randomly selected subsegment, a corresponding coordinate is marked to represent the position of the physical CSI from that specific perspective - from a short subsegment to compute a longer entire physical CSI. Given the explicit subcarrier's frequency information, this method enables the calculation of the relative frequency and positional information for each subsegment and the entire physical CSI. The whole process is encapsulated as a relative frequency embedding module that computes the prior information of the relative frequency embedding. 

As illustrated by the Figure~\ref{f:rfe}, for each randomly selected subset segmented CSI, \networkname notes it as an anchor with coordinate $\boldsymbol{p}_A = {p_{Ai},p_{Aj},p_{Ah},p_{Aw}}$, where ${i,j,h,w}$ denotes top, left, height and width position of each CSI matrix ($\boldsymbol{H}$ and $\boldsymbol{\Tilde{H}}$).
The coordinate of original CSI $\boldsymbol{p}_O = {p_{Oi},p_{Oj},p_{Oh},p_{Ow}}$ is calculated based on each anchor coordinate.  For example, the seg1 in Figure~\ref{f:rfe} is $\boldsymbol{p}_A ={0,0,3,l_1}$ whereas the coordinate of original CSI is $\boldsymbol{p}_O = {0,-a1,3,f-a1-l_1}$ To further prepare the relative coordinate for network training purposes, we encode the relative coordination following the cos-sin absolute position embedding in MAE~\cite{he2022masked}.

\subsection{Self-conditioned Diffusion}
\label{s:diff}

\networkname innovatively manipulates the diffusion model, using shorter CSI segments as a self-conditioning mechanism, enabling it to translate current multipath profile - $\Omega$ into the extrapolated \term{eCSI}. This design ensures that each \term{eCSI} shares the same physical correspondence with its respective input physical CSI. 

Similar to the conventional reverse diffusion process, which progressively generates meaningful image content from noise, where the intermediate image in fact acts as the self-conditioning agent~\cite{rombach2022high}, \networkname takes inspiration from this approach. \networkname uses a batch of CSI subsegments of varying lengths, each serving as a self-condition, to generate the entire CSI.
In particular, given the sub-segment CSIs, the entire physical CSI and their output of the RFE, \networkname first encodes each sub-segment and entire CSI by a frozen encoder. Note that the encoder can be any backbone and \networkname does not train it, it just serves as a compressor with a transformation mapping to transfer the sub-segment CSI and the entire CSI to the compact latent space, where they inherently share the same multipath profile, resulting in $z_a$ and $z_b$. $z_a$ is the random sub-segment CSI and $z_b$ denotes the entire measured CSI. The compression aims to improve the efficiency and speed up the convergence of diffusion models~\cite{rombach2022high}.
The forward process of \networkname happens on the entire CSI $z_b$ such that:
\begin{align}
\label{eqn:forward}
&q\left(\mathbf{z}_{b_t} \mid \mathbf{z}_{b_{t-1}}\right)=\mathcal{G}\left(\mathbf{z}_{b_{t-1}} \mid \sqrt{\alpha_t} \mathbf{z}_{b_{t-1}}, \beta_t \mathbf{I}\right), \text { and } \nonumber \\
&q\left(\mathbf{z}_{b_t} \mid \mathbf{z}_{b_0}\right)=\mathcal{G}\left(\mathbf{z}_{b_t} \mid \sqrt{\bar{\alpha}_t},\left(1-\bar{\alpha}_t\right) \mathbf{I}\right),
\end{align}
where $z_{b_0}$ is the original $z_b$, $\bar{\alpha}_t=\prod_{t=1}^t \alpha_i$ and $t$ is our hyperparameter to control the number of steps. $q\left(\cdot \mid \cdot \right)$ is a representation of a Markov chain, $\mathcal{G}(0,1)$ denots standard Guassion noise and $\alpha$ and $\beta$ represent the noise schedule, $\alpha + \beta = 1$.

For the backward process, given a trained neural network $\mathcal{G}_\theta$ (to be detailed later) takes as inputs subsegment CSI, $z_a$ , noisy entire CSI $z_{b_t}$,  and relative frequency embedding $\mathcal{E}$. Then, the diffusion aims to predict the added noise $\epsilon_t$ on the latent physical CSI $H$ with extrapolated length. The loss function for the model can be written as:
\begin{equation}
\label{eqn:loss}
\mathcal{L}=\left\|\widetilde{\epsilon}_t-\epsilon_t\right\|_2^2 \text {, and } \widetilde{\epsilon}_t=\mathcal{G}_\theta \left(\mathbf{z}_a, \mathbf{z}_{b_t}, \mathcal{E}, t\right) \text {. }
\end{equation}
\networkname solves the diffusion by score-based model and works by learning to transform a standard normal distribution into an empirical data distribution through a sequence of refinement steps, resembling Langevin dynamics~\cite{song2021scorebased}. Consequently, during the extrapolation stage, the \networkname is able to draw the extrapolated \term{eCSI} condition on current input CSI via this reverse diffusion process. 
\subsection{Frequency-aware Transformer Model}
Here we detail the network $\mathcal{G}_\theta$ used to incorporate RFE, batches of sub-segment CSI and the diffusion process. 
In cooperation with RFE, \networkname proposes a frequency-aware cross-attention mechanism between RFE and two different sizes CSI, which helps \networkname to execute extrapolation in only one step for any multiple settings. In particular, we first concatenate the noisy entire CSI $z_{b_t}$, sub-segment CSI sequence $z_a$ at the channel dimension, followed by a linear layer to reduce the dimension to reduce the computation cost, and we denote as $z_{g}$, which later is fed into the transformer block~\cite{vaswani2017attention}. The frequency-aware cross-attention mechanism is proposed to learn the different scales of positional relationship such that: 
\begin{equation}
\mathbf{z}_d=\operatorname{Attn}\left(\mathbf{Q}_{\mathbf{E}}, \mathbf{K}_{\mathbf{z}_g}, \mathbf{V}_{\mathbf{z}_g}\right)=\operatorname{Softmax}\left(\frac{\mathbf{Q}_{\mathbf{E}} \mathbf{K}_{\mathbf{z}_{\mathbf{g}}}^{\top}}{\sqrt{D}}\right) \mathbf{V}_{\mathbf{z}_g},
\end{equation}
where $\mathbf{Q}_{\mathbf{E}} = \mathcal{E}\mathbf{W_{q}}$, 
$\mathbf{K}_{\mathbf{z}_g} = \mathbf{z}_g\mathbf{W_k}$, 
$\mathbf{V}_{\mathbf{z}_g} = \mathbf{z}_g\mathbf{W_{v}}$ and $\mathbf{W_{q}},\mathbf{W_{k}}, \mathbf{W_{v}}$ are all learnable parameters. After that, the $\mathbf{z}_d$ is directly fed into the transformer decoder, followed by a convolutional layer to predict noise. 


 
    
    
    
    
    
    
    





\begin{algorithm}[t]
\caption{\sysname\ \networkname}
\label{alg:training}

\textbf{Training:}\;

\KwIn{
measured physical CSI $\boldsymbol{H}$; 
frequency-aware transformer $\mathcal{G}_{\theta}$;
encoder/decoder $F_E, F_D$; timestep $T$
}
\KwOut{pretrained network $\mathcal{G}_{\theta}$}

Initialize $\mathcal{G}_{\theta}$\;

\While{not converged}{
  Randomly sample $t \in \{1,\dots,T\}$\;

  $\widetilde{\boldsymbol{H}} \leftarrow \text{Randomselect}(\boldsymbol{H})$\;
  Compute the relative position embedding $\mathbf{E}$\;

  Get compact features $\mathbf{z}_a, \mathbf{z}_b \leftarrow F_E(\widetilde{\boldsymbol{H}}, \boldsymbol{H})$\;

  Randomly sample Gaussian noise $\epsilon_t$\;

  Add noise to obtain $\mathbf{z}_{b_t}$ via Eq.~\ref{eqn:forward}\;

  Predict $\widetilde{\epsilon}_t \leftarrow \mathcal{G}_{\theta}(\mathbf{z}_a, \mathbf{z}_{b_t}, t)$\;

  Compute loss $\lVert \widetilde{\epsilon}_t - \epsilon_t \rVert_p^p$ via Eq.~\ref{eqn:loss}\;

  Backward and update $\mathcal{G}_{\theta}$\;
}

\BlankLine

\KwOut{$\mathcal{G}_{\theta}$}
Extrapolation:

\KwIn{multiple $k$; measured physical CSI $\boldsymbol{H}$}
\KwResult{the extrapolated \term{eCSI}}

Get the relative position embedding $\mathbf{E}$ for \term{eCSI} using $k$ and $\boldsymbol{H}$\;

Randomly sample Gaussian noise with the size of \term{eCSI}\;

Set $T \leftarrow 0$ and denoise via Eq.~\ref{eqn:reverse}\;
\end{algorithm}

\subsection{One-step CSI Extroplation}
Once \networkname is trained, one can extrapolate CSI in any controlled multiples $k$, since the designed RPE can represent any positional relationship between the physical CSI and the extrapolated \term{eCSI}. In particular, the input physical CSI is treated as the anchor and together with $k$, \networkname generates a virtual container, $\widetilde{\mathbf{z}}_{b_0}$, which is filled with random noise. Then \networkname predicts the noise, compute the  $\widetilde{\mathbf{z}}_{b_0}$, and predict $z_{b_{t-1}}$ step-by-step as mentioned in \sec~\ref{s:diff}, which can be formulated as:
\begin{equation}
\begin{aligned}
& q\left(\mathbf{z}_{b_{t-1}} \mid \mathbf{z}_{b_t}, \widetilde{\mathbf{z}}_{b_0}\right)=\mathcal{G}\left(\mathbf{z}_{b_{t-1}} ; \widetilde{\mu}_t\left(\mathbf{z}_{b_t}, \widetilde{\mathbf{z}}_{b_0}\right), \widetilde{\beta}_t \mathbf{I}\right), \\
& \widetilde{\mu}_t\left(\mathbf{z}_{b_t}, \widetilde{\mathbf{z}}_{b_0}\right)=\frac{\sqrt{\bar{\alpha}_{t-1}} \beta_t}{1-\bar{\alpha}_t} \widetilde{\mathbf{z}}_{b_0}+\frac{\sqrt{\alpha}\left(1-\bar{\alpha}_{t-1}\right)}{1-\bar{\alpha}_t} \mathbf{z}_{\mathbf{b}_{\mathbf{t}}}, \\
& \text {and } \widetilde{\beta}_t=\frac{1-\bar{\alpha}_{t-1}}{1-\bar{\alpha}_t} \beta_t \text {. } 
\end{aligned}
\label{eqn:reverse}
\end{equation}
After several iterations, when $t = 0$, \networkname can obtain the extrapolated \term{eCSI}.

\subsection{One-step CSI Extroplation}
Once \networkname is trained, one can extrapolate CSI in any controlled multiples $k$, since the designed RPE can represent any positional relationship between the physical CSI and the extrapolated \term{eCSI}. In particular, the input physical CSI is treated as the anchor and together with $k$, \networkname generates a virtual container, $\widetilde{\mathbf{z}}_{b_0}$, which is filled with random noise. Then \networkname predicts the noise, compute the  $\widetilde{\mathbf{z}}_{b_0}$, and predict $z_{b_{t-1}}$ step-by-step as mentioned in \sec~\ref{s:diff}, which can be formulated as:
\begin{equation}
\begin{aligned}
& q\left(\mathbf{z}_{b_{t-1}} \mid \mathbf{z}_{b_t}, \widetilde{\mathbf{z}}_{b_0}\right)=\mathcal{G}\left(\mathbf{z}_{b_{t-1}} ; \widetilde{\mu}_t\left(\mathbf{z}_{b_t}, \widetilde{\mathbf{z}}_{b_0}\right), \widetilde{\beta}_t \mathbf{I}\right), \\
& \widetilde{\mu}_t\left(\mathbf{z}_{b_t}, \widetilde{\mathbf{z}}_{b_0}\right)=\frac{\sqrt{\bar{\alpha}_{t-1}} \beta_t}{1-\bar{\alpha}_t} \widetilde{\mathbf{z}}_{b_0}+\frac{\sqrt{\alpha}\left(1-\bar{\alpha}_{t-1}\right)}{1-\bar{\alpha}_t} \mathbf{z}_{\mathbf{b}_{\mathbf{t}}}, \\
& \text {and } \widetilde{\beta}_t=\frac{1-\bar{\alpha}_{t-1}}{1-\bar{\alpha}_t} \beta_t \text {. } 
\end{aligned}
\label{eqn:reverse}
\end{equation}
After several iterations, when $t = 0$, \networkname can obtain the extrapolated \term{eCSI}.

\section{Experiments and Evaluation}
\networkname stands apart from the classic deep learning methods that are based on maximizing likelihood to estimate a set of parameters of an assumed probability distribution, given some observed data. Instead, \networkname relies on the universal approximation theorem and uses sample-level observed data as self-conditioning to adapt and translate its physical information into the extrapolated eCSI. As a result, our approach should not rely on a particular dataset or channel conditions. To validate the efficacy, we evaluate the system with different WiFi protocols, different hardware devices (with different NIC cards), using both self-collected and publicly available datasets as well as WiFi CSI and UWB CSI. The detailed information is listed in Table~\ref{t:settings}.

\begin{table}[h]
\footnotesize
\centering
\caption{Summary of evaluation settings.}
\label{t:settings}
\begin{tabular}{c|c|c|c}
\hline
Device       & Protocol       & Bandwidth      & Dataset           \\ \hline
ath9k-AR9344 & 802.11n 2.4GHz & 40MHz & HandFi-gesture~\cite{10.1145/3625687.3625812}     \\
RPI-AX200    & 802.11n 5GHz   & 20MHz          & NWB-localization  \\
Pico-AX210   & 802.11ax 6GHz  & 160MHz         & NWB-vital sign    \\
Quantenna AP & 802.11ac 5GHz  & 80MHz          & DLoc-localization~\cite{ayyalasomayajula2020deep}           
\end{tabular}
\vspace{-6mm}
\end{table}

\subsection{Evaluation Setup}

\begin{figure*}[t!]
       \subfloat[RasPi-AX200]{
    \label{subfig:ax200}
  \includegraphics[width=0.15\linewidth]{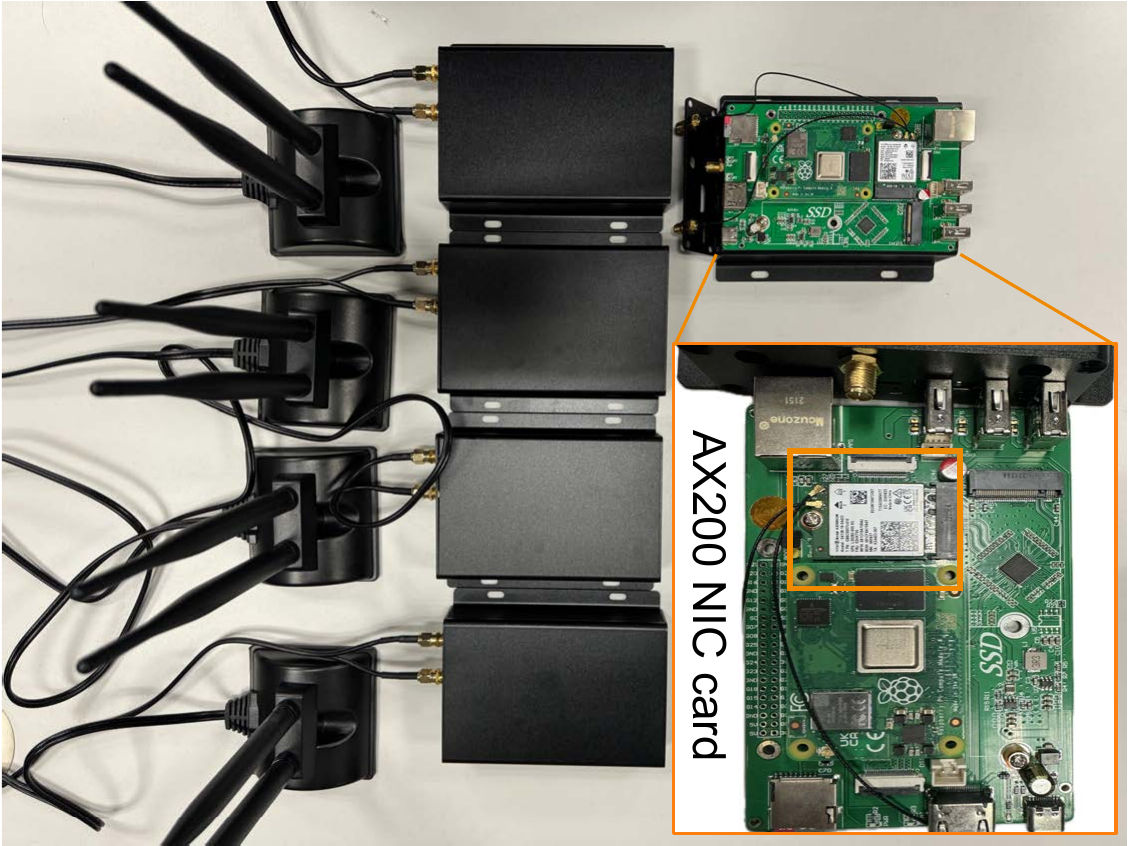}}
  \hfill
    \subfloat[Picoscences-AX210]{
    \label{subfig:210}
  \includegraphics[width=0.15\linewidth]{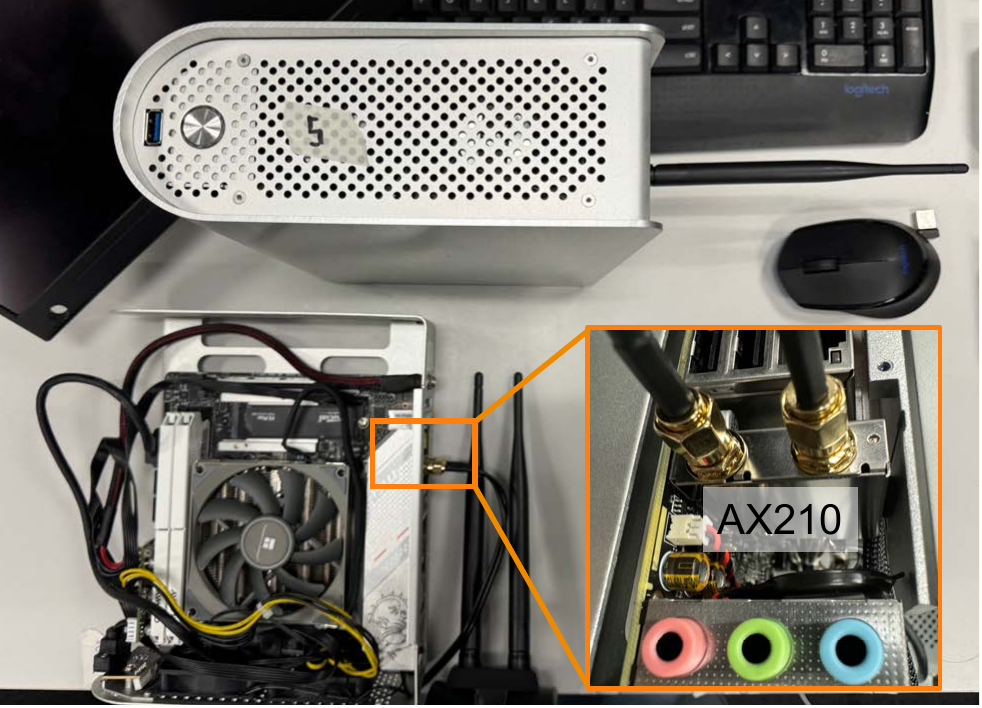}}
  \hfill
    \subfloat[Hallway]{
    \label{subfig:hallway}
    \includegraphics[width=0.15\linewidth]{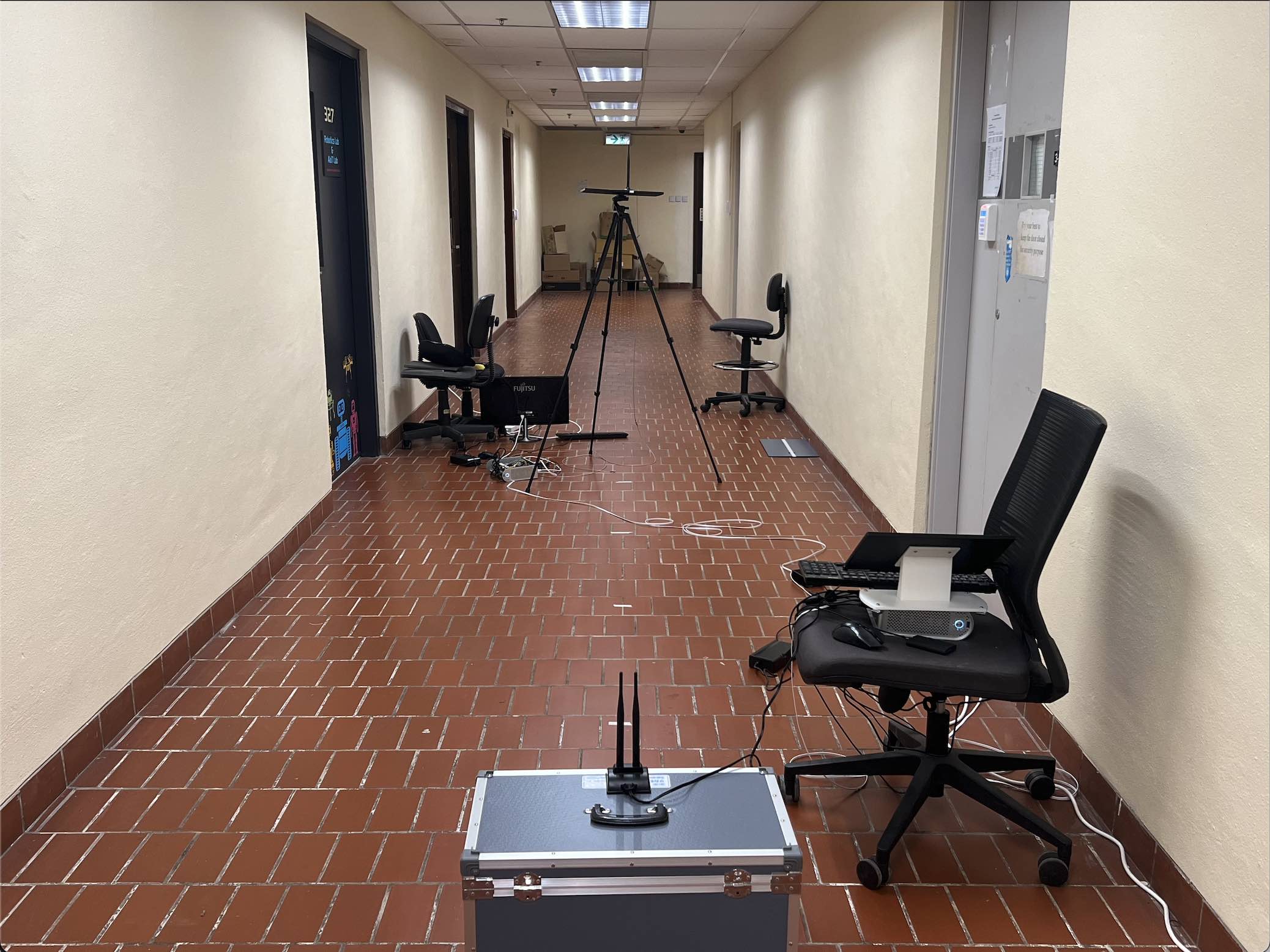}}
   \hfill
    \subfloat[Classroom]{
    \label{subfig:classroom}
  \includegraphics[width=0.15\linewidth]{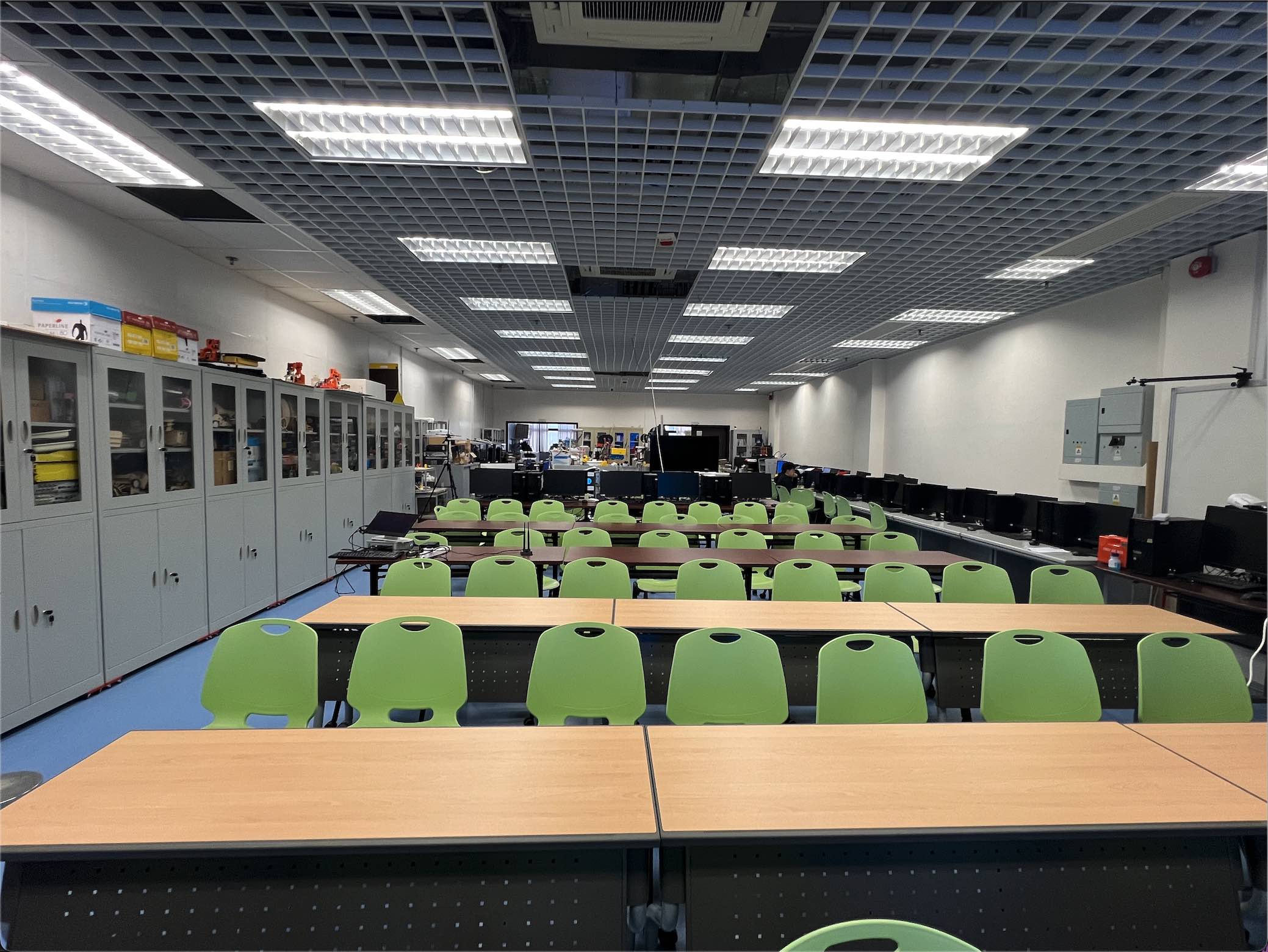}}
   \hfill
       \subfloat[Office]{
    \label{subfig:office}
  \includegraphics[width=0.15\linewidth]{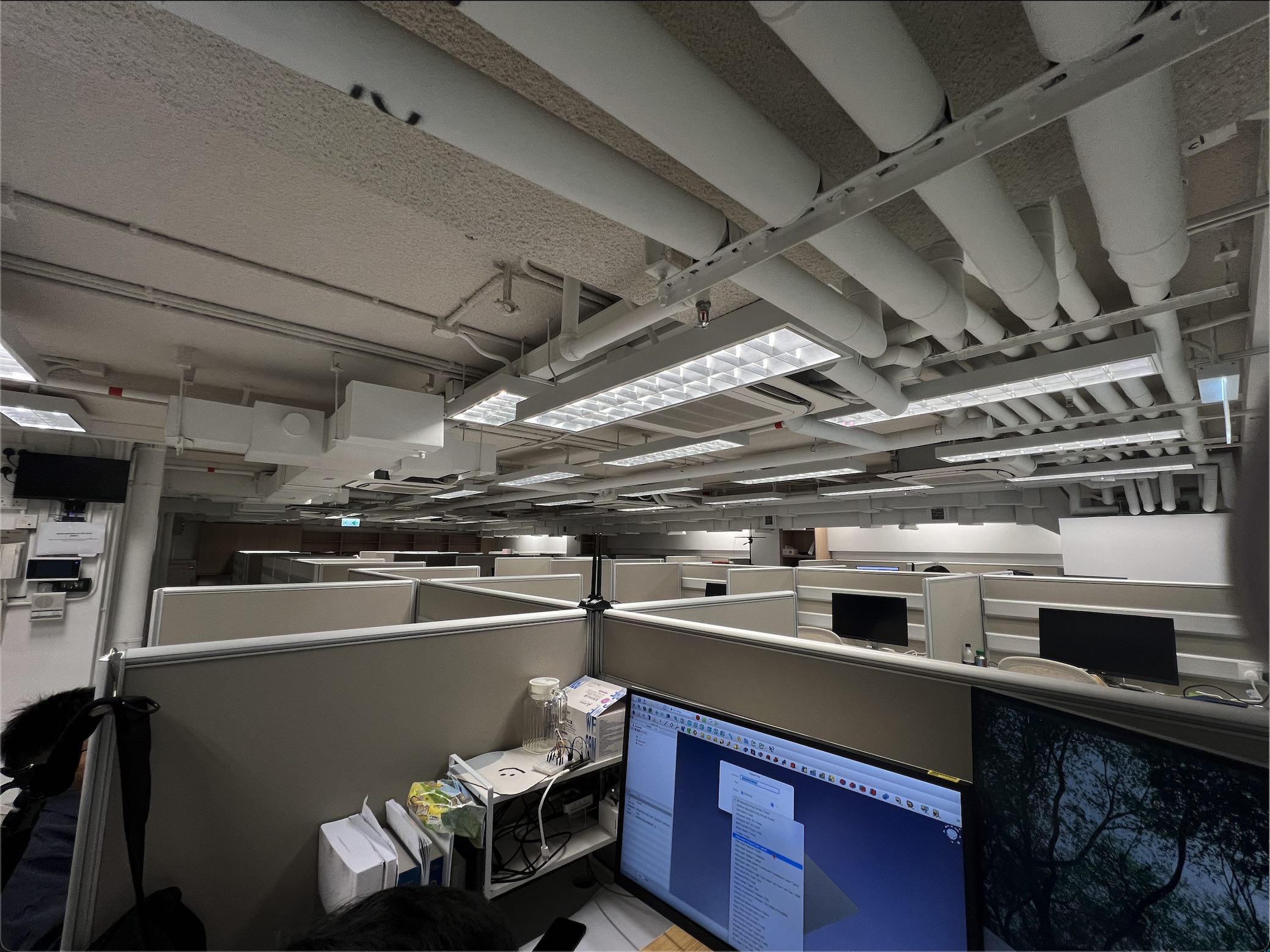}}
  \hfill
    \subfloat[Meeting Room]{
    \label{subfig:meeting}
  \includegraphics[width=0.15\linewidth]{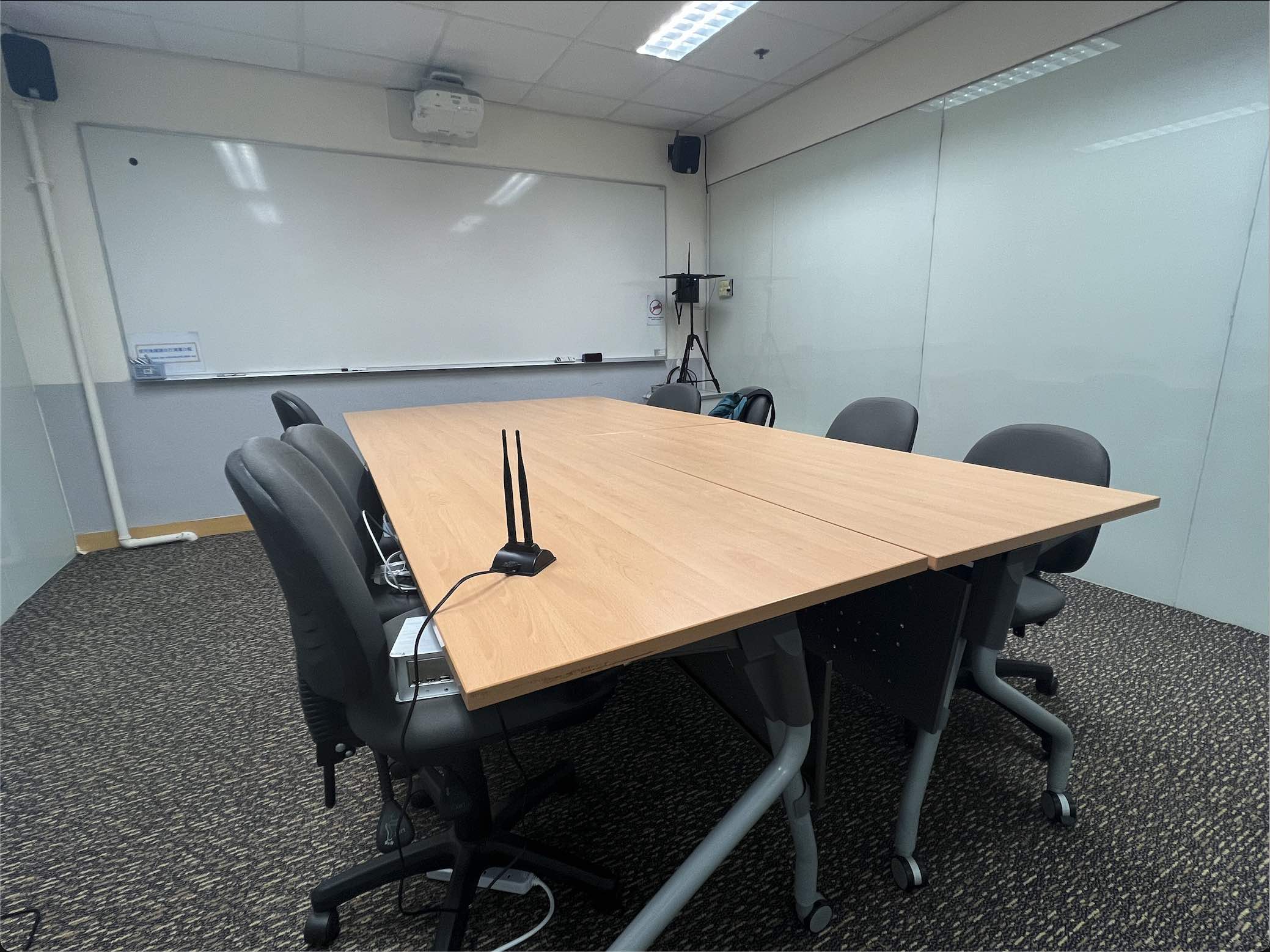}}
    \caption{\sysname Data Collection Hardware: (a)(b); Scenario: (c) a natural hallway (d) a highly dynamic classroom (e) a natural office (f) a static meeting room.} \label{f:envs}
\end{figure*}
\head{Ground Truth Construction for eCSI Validation}
Since no publicly available dataset of 160MHz was found, we modified the INTEL IWLMVM driver for AX200 Network Card (NIC) to enable CSI recording on the Raspberry Pi Compute Module 4 (RPI) installed on an extension board (Figure~\ref{subfig:ax200}). The device is set to sub-6GHz bands with 160MHz bandwidth. To further access the whole 1GHz band on sub-7GHz, we used two NUC Mini PCs with AX210 NIC driven by PicoScenes~\cite{picos} (Figure~\ref{subfig:210}). The PicoScenes tool was configured in initiator-responder mode, enabling automatic sweeping across seven consecutive 160MHz channels in total. Both platforms are configured as one transmission antenna and two receiver antennas, therefore one transmission will yield two samples. The sampling rate is configured as 100Hz.
During inference, only 20/40/80MHz sub-bands of the 160MHz samples are used as input to Wukong, while the full-band 160MHz CSI serves as the ground truth for evaluation. This enables direct comparison between \term{eCSI} and true wideband CSI.

\head{Dataset for cross-domain generalization}
To evaluate both model fidelity and generalization, we test \sysname not only on self-collected datasets featuring diverse multipath conditions, but also on public datasets collected under different protocols and hardware platforms. Altogether, our data spans a wide range of protocols (802.11n/ac/ax), frequency bands (2.4GHz–6GHz), and hardware setups, including Intel AX200, AX210, Quantenna APs, and DW1000 UWB devices. This experimental setup enables us to assess Wukong’s ability to generalize across heterogeneous physical environments, device types, and signal characteristics—all without retraining.
For public datasets, we strictly follow the original train-test splits as described in their respective papers. For the self-collected NBW dataset, we gather CSI data from four distinct indoor environments (Figure 5): a quiet meeting room (controlled setup), a natural office space, a hallway with intermittent activity, and a highly dynamic classroom during the minutes leading up to a lecture. The latter two in particular reflect non-stationary, challenging multipath conditions with moving users and environmental variability.

In each location, we collect 12,000 CSI samples, allocating 10,000 for training and 2,000 for testing. All CSI data is preprocessed by normalizing amplitude and scaling the frequency dimension to the [0, 1] range (by multiplying the subcarrier frequency in MHz by $10^{-4}$). Each CSI sample is represented as a 3 × 32 × 32 tensor, where the three channels encode the real part, imaginary part, and normalized carrier frequency, respectively. This formatting enables the network to jointly learn amplitude, phase, and frequency structure in a unified representation. Performance is evaluated primarily using Mean Squared Error (MSE) between extrapolated and ground truth CSI.

\begin{figure}[t!]
\centering
  
  \includegraphics[width=0.9\linewidth]{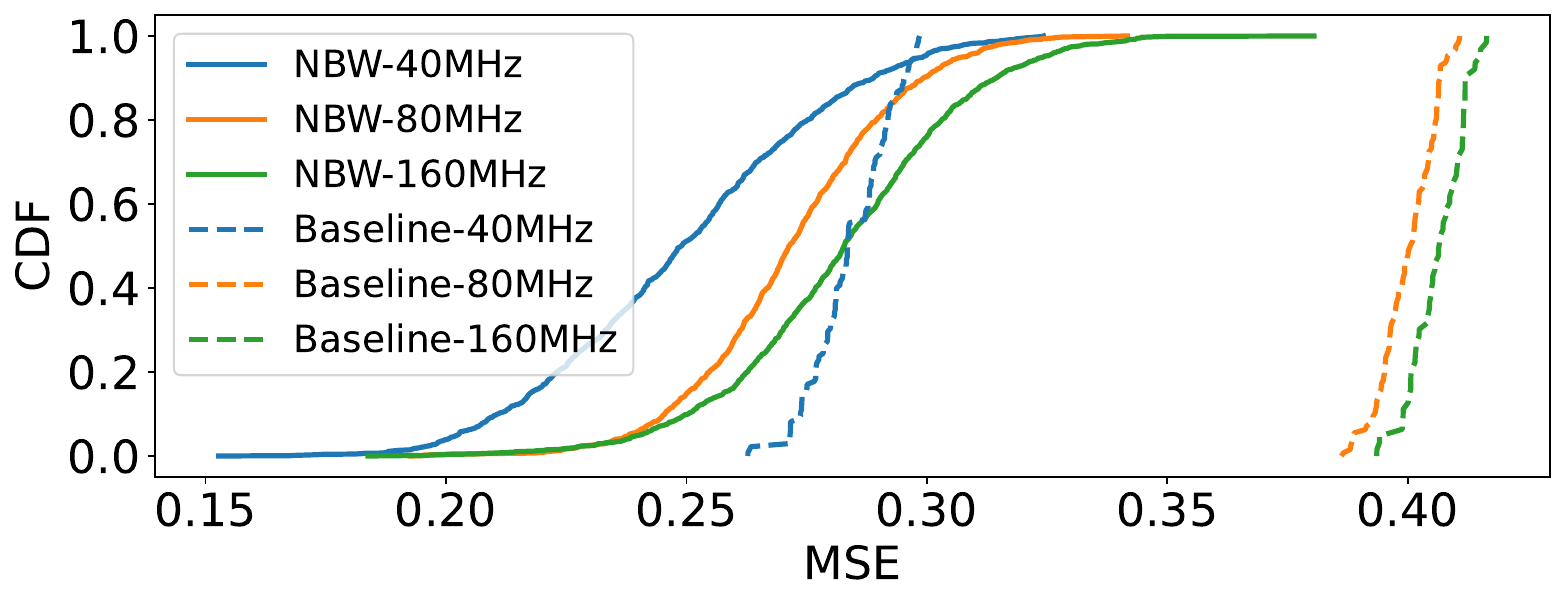}
  \vspace{-0.9\baselineskip}
    \caption{CDF result of MSE comparison with the baseline of the extrapolated eCSI. Input: 20MHz,  Extrapolated: 40MHz, 80MHz, 160MHz, with ground truth measured.} \label{subfig:baseline}
\end{figure}

\head{Training Details}
The frozen (non-training) encoder and decoder are fetched from Hugging Face~\cite{Hugging} and disable the normalization layers.  \networkname is implemented with PyTorch on a server equipped with one 4090 GPU. The Frequency Aware Transformer is composed of 4 stacked transformer blocks with a cross-attention block. A convolutional layer is adopted at the end to smooth the extrapolated eCSI. Due to the training strategy of the network, it is designed to be able to extrapolate the entire CSI from each sub-band CSI in each batch, which consists of sub-bands of different random lengths, and generate the full-band CSI from random full-band CSI noise through diffusion. Therefore, except for epoch, we have another parameter that is diffusion timesteps, which is set as 200. In total, we trained 20,000 steps with 100 epochs. The AdamW optimizer~\cite{loshchilov2017decoupled} is adopted to train the network with a learning rate equal to 0.02, weight decay set to 0.01, and betas set to 0.99. A CosineAnnealingLR scheduler~\cite{loshchilov2016sgdr} is also adopted, with [T\_max=epochs, T\_warmup=10, eta\_min=0]. Although training Wukong takes several hours, its inference is highly efficient, requiring only ~0.5 ms per sample on an 4090 GPU. This is significantly faster than typical classification or regression tasks in sensing pipelines, making Wukong well-suited for real-time WiFi sensing applications.

\begin{figure*}[t]
\centering
    \subfloat[$20\,\mathrm{MHz} \rightarrow 40\,\mathrm{MHz}$]{
    \label{subfig:40}
    \includegraphics[width=0.15\linewidth]{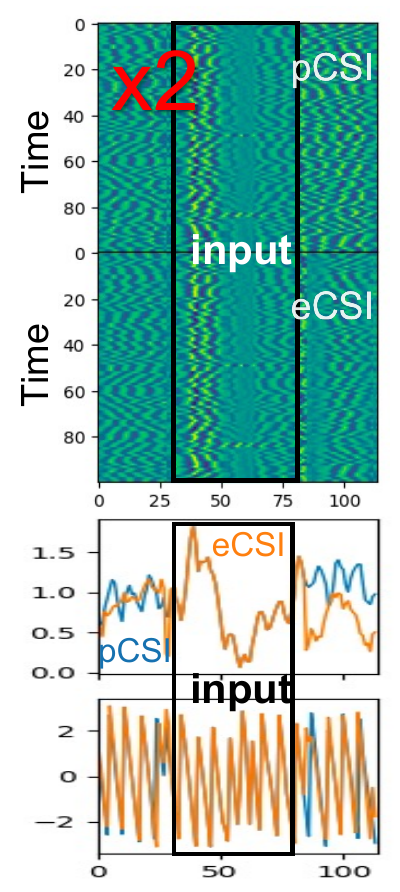}}
   \hspace{0.1em}
    \subfloat[$20\,\mathrm{MHz} \rightarrow 80\,\mathrm{MHz}$]{
    \label{subfig:80}
  \includegraphics[width=0.245\linewidth]{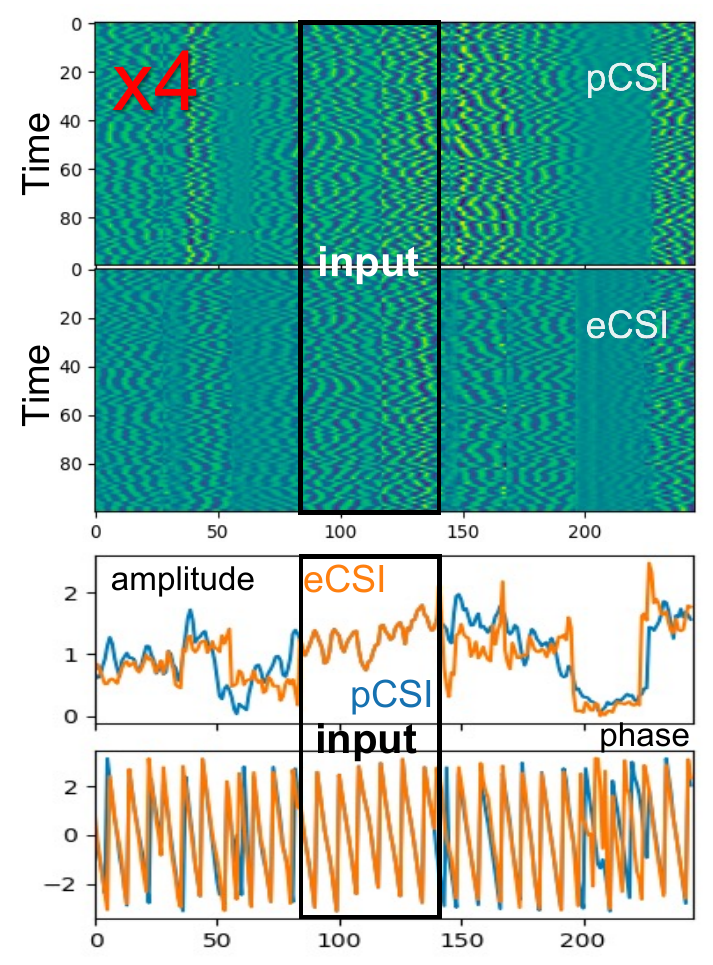}}
  \hspace{0.1em}
       \subfloat[$20\,\mathrm{MHz} \rightarrow 160\,\mathrm{MHz}$]{
    \label{subfig:160}
  \includegraphics[width=0.415\linewidth]{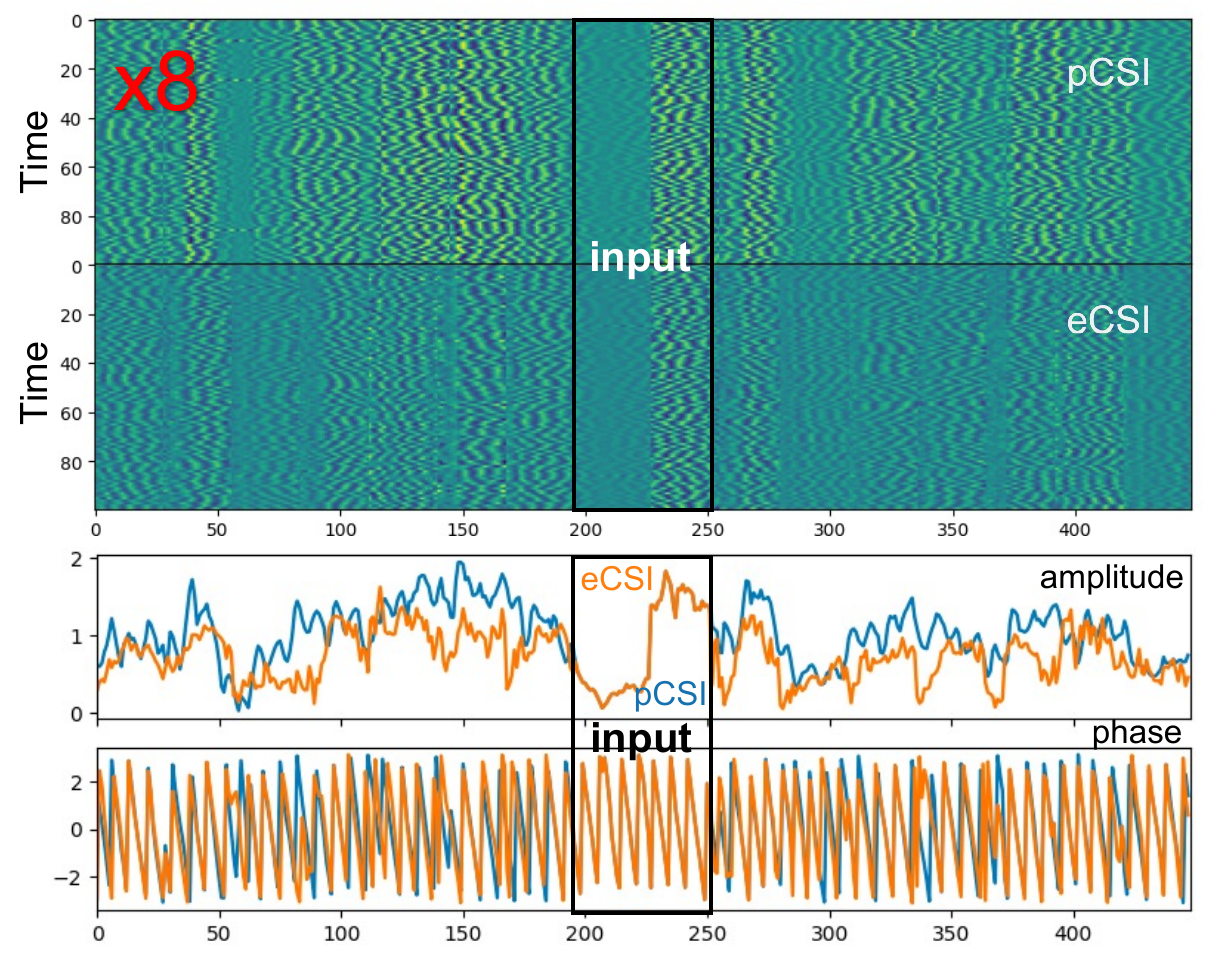}}
  
    \caption{Visualization of eCSI ($\times 2,\times 4,\times 8$) of data samples collected with 6025MHz central frequency.}\label{f:visuals}
\end{figure*}

\subsection{Baseline Comparison}
Due to the absence of a method capable of achieving continuous, larger bandwidth through \textbf{a single inference} while using a limited bandwidth as input, we select the state-of-the-art (SOTA) deep generative-based channel mapping method FIRE~\cite{liu_fire_2021} as the baseline. FIRE employs a VAE~\cite{kingma2013auto} to map the downlink channel based on the uplink channel, assuming channel spatial reciprocity. To adapt the FIRE to achieve NBW concept, we divide the entire band equally and use the first sub-bandwidth for the uplink channel, while the remaining portion is utilized for learning the downlink channel. The larger bandwidth is obtained by splicing those downlink channels\footnote{UWB-FI~\cite{li2024uwb} does not output CSI but instead provides higher resolution AoA-ToF tuples, and it requires input from data across different frequency bands through fast channel hopping, which makes it different from our problem so as not able to use as a baseline.}. 
Figure~\ref{subfig:baseline} reports the MSE results obtained from bandwidth multiples of x2, x4, and x8, corresponding to 40MHz, 80MHz, and 160MHz, respectively, using the NBW-hallway dataset with an input bandwidth of 20MHz. We have three important findings. 
Firstly, the average MSE result of x8 eCSI from \networkname is close to the average MSE result of x2 times extrapolation from the baseline method. Secondly, the MSE increases cumulatively with the increase of extrapolation multiples.
Lastly, the baseline method exhibits a substantially larger error in spliced results compared to its standalone predictions, as evidenced by the significant disparity between x4 and x2.

\begin{table}[t!]
		\centering
 \footnotesize
 \caption{MSE and AccCIR across }
\addtolength{\tabcolsep}{-4pt}
\label{t:overall}
\begin{tabular}{l|cc|c|cccc}
\toprule
Bandwidth & \multicolumn{2}{c|}{x2} & x4 & \multicolumn{4}{c}{x8(NWB)}    \\ \hline
Dataset   & DLoc           & HandFi           & DLoc          & Hallway &  Classroom & Office& Meeting Room \\ \hline
MSE↓       &     0.223           &    0.271              &     0.274          &   0.272      &  0.224      &   0.321        &   0.232           \\ \hline
AccCIR↑    &    0.756            &   0.742               &   0.738            &   0.743      &  0.777      &    0.717       &  0.752            \\ \bottomrule
\end{tabular}

 \label{t:acccir}
 \vspace{-3mm}
\end{table}

To better understand the performance gain of \networkname, we further observed the statistical patterning of eCSI and the magnitude and phase of eCSI.
Figure~\ref{f:visuals} showcases the \sysname, with the upper column displaying a batch of CSI data for overall pattern observation, and the lower column showing one CSI sample for detailed observation. We plot the amplitude and phase of the CSI. The input physical CSI (pCSI) is 20MHz and the eCSI are 40MHz, 80MHz, and 160MHz respectively. It is safe to say that the network is able to capture the underlying multipath profile because no matter 2 times, 4 times, or 8 times results of eCSI, they are similar to the pCSI. It is worth noting that although the overall MSE error increases with the bandwidth of eCSI, we do not observe the expected lower accuracy of eCSI the farther away the bandwidth is. This also indicates that the network extrapolates to different frequencies based on the self-conditioned multipath profile of each sample, rather than extrapolating the input sample with a fixed learned pattern.


\subsection{Multipath Resolution}
Since the system is designed to extrapolate the measured physical CSI in a plug-and-play fashion, rather than directly increasing the accuracy of a specific sensing application in an end-to-end manner, we introduce distinguishable path resolution accuracy as another metric, denoted as AccCIR, which is represented as the correlation between the CIR (Channel Impulse Response) profile of eCSI and the ground truth pCSI’s CIR profile. Due to the fact that the real and imaginary parts are separated and input into the network and the input essentially is CFR, the AccCIR reflects the effectiveness of eCSI. \tab\ref{t:acccir} gives the results across datasets and across bandwidths, where overall a lower MSE can represent a higher AccCIR. To visualize the physical meaning of AccCIR, we draw two traces from the classroom and office, respectively.




\begin{figure}[t]
\centering
    \subfloat[Sub-bands volume]{
    \label{subfig:samples}
    \includegraphics[width=0.33\linewidth]{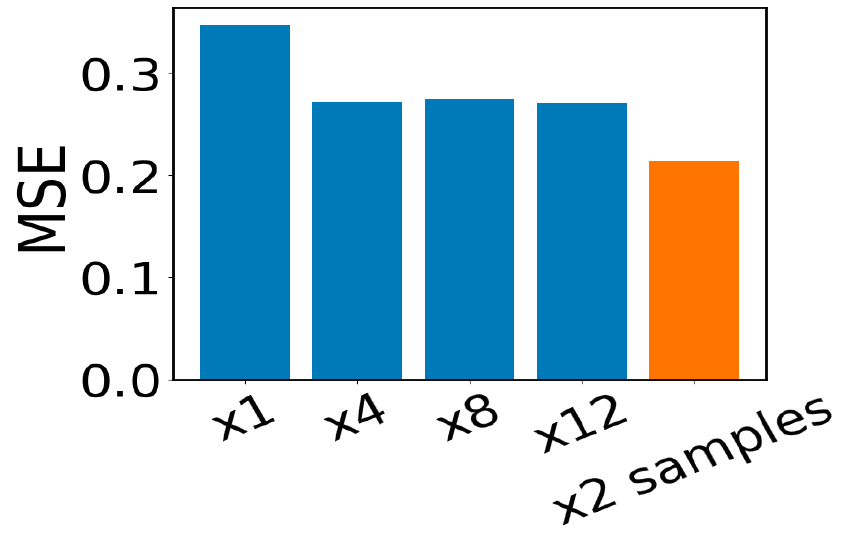}}
    \subfloat[Time steps]{
    \label{subfig:ts}
  \includegraphics[width=0.33\linewidth]{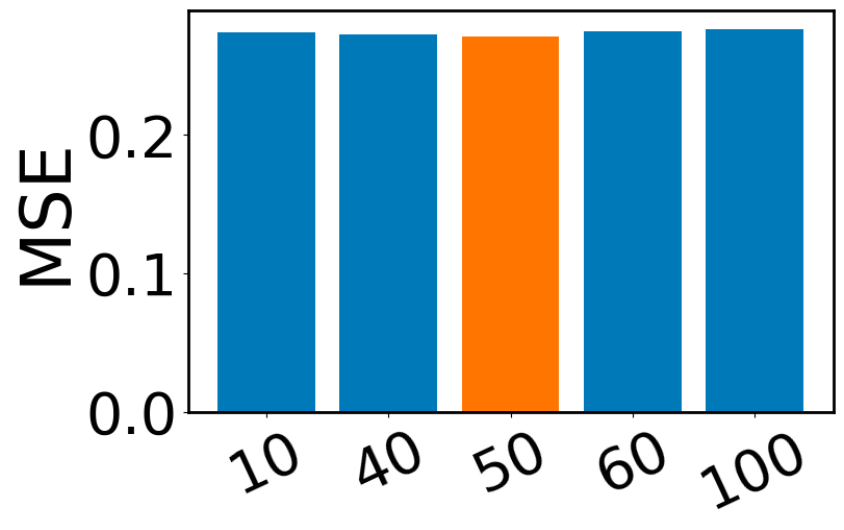}}
       \subfloat[Batch size]{
    \label{subfig:bs}
  \includegraphics[width=0.33\linewidth]{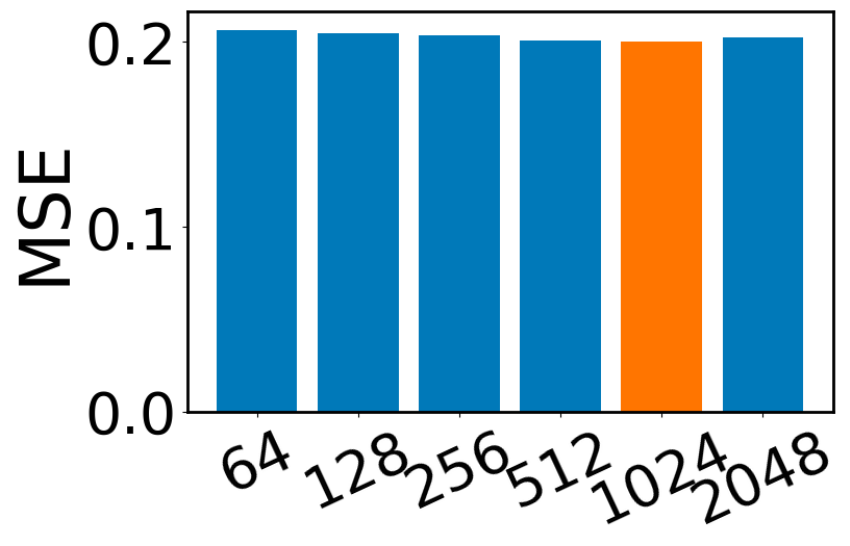}}
  
    \caption{\networkname parameter search}\label{f:ab}
\end{figure}

\subsection{Parameter Study}
To understand the impact of \networkname's hyper-parameters, we utilize the data collected in a hallway environment for a grid search experiment. We calculate the mean square error (MSE) between 160MHz eCSI and the ground truth 160MHz physical CSI.

Since the network randomly selects sub-bands for each input sample, we investigate the results of randomly selecting 1, 4, 8, and 12 times sub-bands respectively, as shown in Figure~\ref{subfig:samples}. For comparison, we only randomly select once but double the size of CSI sample directly. We observe that augmenting the number of random subband selections is not as helpful as increasing the diversity of CSI samples. The reason behind this may be that whether we select 4, 8, or 12 subbands, they inherently convey the same underlying information about the multipath profile. Conversely, an increase in CSI samples results in a broader and more varied depiction of the multipath scenario.

Next, we investigate the impact of the number of time steps on the diffusion process. This parameter directly influences the noise scheduler table and thus the noise waveform. A higher time step adds more noise, meanwhile, it also necessitates a longer computation time. We vary the time steps from 10 to 100. The findings, depicted in Figure~\ref{subfig:ts}, indicate that beyond 50 time steps, increasing the level of noise does not significantly improve the system's performance.

We also study the impact of batch sizes, ranging from 64 to 2024 with a doubling interval. Figure~\ref{subfig:bs} shows that MSE decreases with increasing batch size up to 1024, but rises noticeably beyond this point.

In addition, since the size of the input CSI varies due to different protocols and bandwidth, \networkname chooses to zero-pad the input if it is too small. We also investigate the impact of the zero-pad operation versus the bilinear interpolation operation. The MSE results are 0.224967 and 0.226572, respectively. The results do not show much difference. Although a zero-pad operation on the input is believed to cause the network to easily get stuck in local minima, \networkname does not follow traditional deep learning practices to reconstruct the input; therefore, the zero-pad here is more like a mask for the network. In summary, we configure the randomly selected sub-band augment as 4 times, time steps as 50, and batch size as 1024. All results later in the paper are based on this setting.

\section{Case Studies: eCSI-based Sensing}
\label{sec:case_study}

We conduct two case studies to demonstrate the effectiveness of eCSI for finer-grained WiFi sensing. 

\subsection{Case I: Localization }

\label{sec:c1}
WiFi indoor localization is the most important sensing application because it compensates for the failure of GPS indoors. Time of Flight (ToF) is a crucial factor in enabling indoor localization and serves as the gold standard for calculating ranging error. However, obtaining the actual ToF information from the CSI to form distance estimates is challenging due to random hardware offsets, which do not accurately capture the true ToF of the signal from the target to the AP. As a result, many efforts have focused on increasing ToF accuracy by eliminating hardware offsets~\cite{xiong2015tonetrack}, which is beyond the system's scope. 

\begin{figure}[t]
\centering
\includegraphics[width=0.98\linewidth]{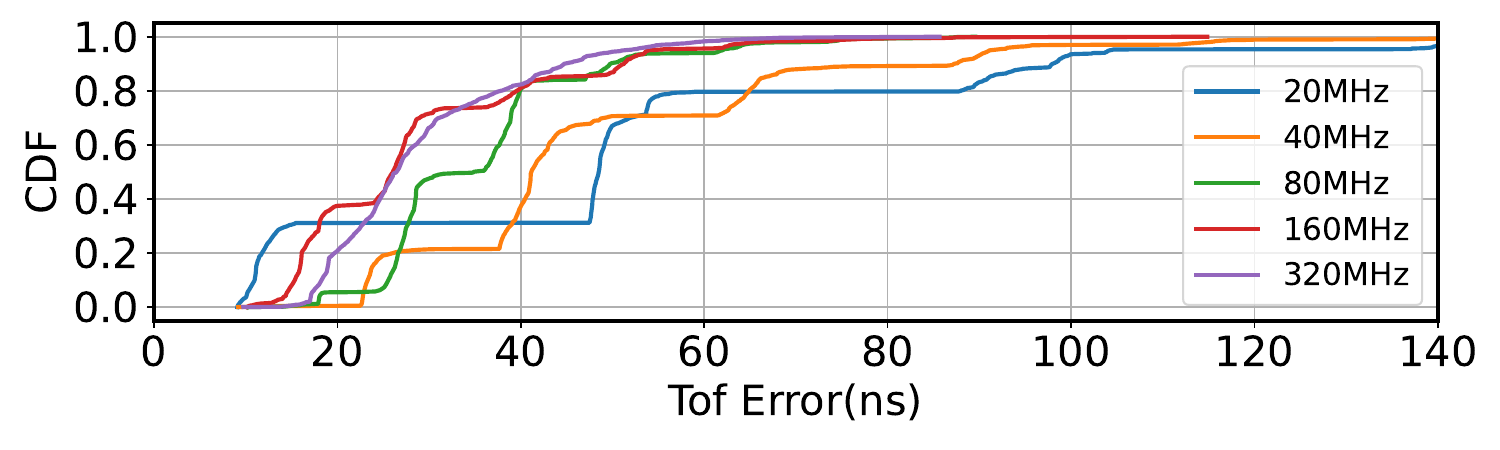}
    \caption{ToF errors in DLoc dataset, where 20,40,80MHz CSI is physical CSI from the dataset, while 160MHz and 320MHz are eCSI based on 80MHz physical CSI.} \label{subfig:tof}
\end{figure}

We aim to investigate whether the increased resolvability of distinguishable propagation paths can result in reduced ToF error. We conduct our evaluation on the DLoc dataset \cite{ayyalasomayajula2020deep}, which retains the ground truth of ToF error. This dataset is collected from two different environments, one in a multipath-rich and non-line-of-sight (NLOS) environment spanning 1500 square feet, and the other in a line-of-sight (LOS) environment. 
The dataset contains CSI samples from 20MHz, 40MHz, and 80MHz. The ground truth ToF is measured using a robot platform equipped with LIDAR, camera, and odometry. 

We adopt a simple approach to calculate the ToF from CIR by finding the first dominant peak. 
The results are depicted in Figure~\ref{subfig:tof}, with the 20, 40, and 80MHz errors derived directly from the data, and the 160MHz and 360MHz errors obtained from the 80MHz physical CSI.  
As the bandwidth of CSI increases from 20MHz to 80MHz and to eCSI with 320MHz, the mean ToF error generally decreases, dropping from around 50.03ns to 35.34ns and 29.29ns.
As the bandwidth increases, there is an observed increase in the error for the 90th percentile, rising from 90.03ns at 20MHz to 49.77ns at 80MHz, and further to 51.26ns at 160MHz eCSI. However, there is a slight decrease at 320MHz eCSI, where the error rate for the 90th percentile reaches 44.73ns, falling between the error rates at 160MHz and 80MHz.
These findings suggest that higher frequency ToF systems may provide improved average performance, but the distribution of errors is more complex and may require further investigation to understand the factors contributing to the outliers observed in the data.
\vspace{-0.9\baselineskip}

\subsection{Case II: Multi-person Breath Estimation}
\label{sec:c2}

Breathe estimation is one of the killer applications of WiFi sensing. 
Nevertheless, the limited bandwidth of WiFi fails to offer sufficient range resolution, which poses a significant challenge for multi-user breathing rate estimation. 
Existing solutions either require an extremely high frame rate (1000 frames/s) with calibration~\cite{zeng2020multisense} or the presence of a near-field signal from another proximity IoT device~\cite{hu2023muse} to enable multi-person breath estimation. As \sysname is able to obtain instant eCSI with improved multipath resolution, we are interested in evaluating how it could enable multi-person breathing rate estimation in practice. 

We carried out the case study in a square room with a size of 7m x 7m and set a pair of transceivers 1.6m apart. Three subjects sat at varying vertical distances of 1m, 2m, and 4m from the line of sight (LOS) between the WiFi Tx and Rx, as denoted by the red crosses in Figure~\ref{subfig:layout}. All three subjects breathed at the same respiration rate of 15 bpm but took turns to hold their breath. 
\begin{figure}[th!]
       \subfloat[Experimental setup]{
    \label{subfig:layout}
  \includegraphics[width=0.42\linewidth]{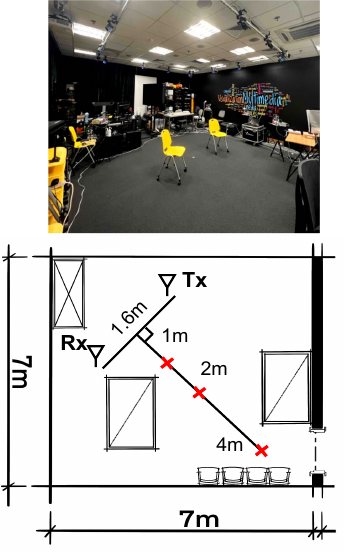}}
  \quad
    \subfloat[Multi-person breath estimation]{
    \label{subfig:bpm}
  \includegraphics[width=0.5\linewidth]{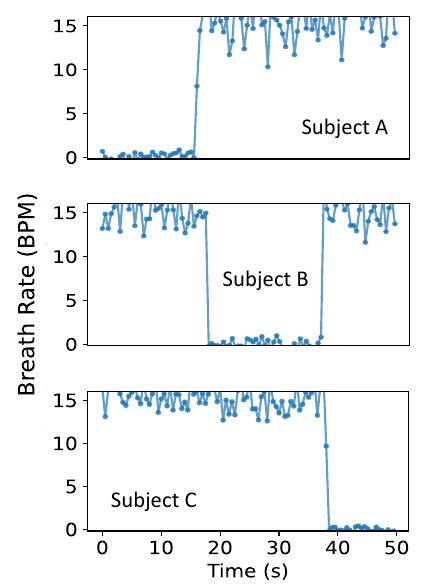}}
 
    \caption{Multi-person breath estimation: (a) Experimental setup: three people breathing at the same respiratory rate in the position of the red cross as denoted in the floor plan. (b) Breathe estimation for different subjects.} 
\end{figure}
\begin{figure}[h!]
\vspace{-0.9\baselineskip}
       \subfloat[The physical CIR of 20MHz]{
    \label{subfig:cirgt}
  \includegraphics[width=0.45\linewidth]{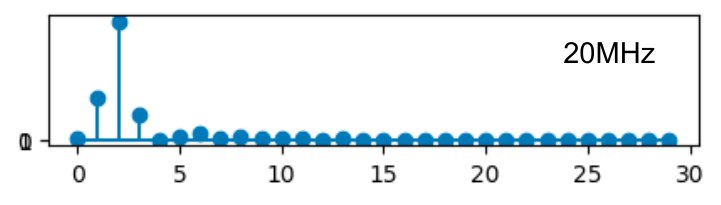}}
  \quad
    \subfloat[The CIR from \sysname]{
    \label{subfig:bigcir}
  \includegraphics[width=0.45\linewidth]{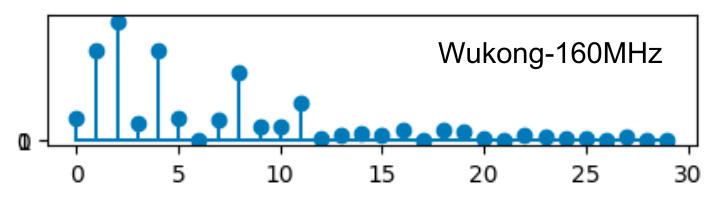}}
   \caption{CIR of pCSI v.s. eCSI}\label{f:cirr}
   \vspace{-6mm}
\end{figure}
To estimate the respiratory rate, the CSI is first converted to CIR, and then a peak identification algorithm is carried out to identify the dominant paths. We omit the first peak because it reflects the direct path and then use the remaining three peaks as paths to identify the individual. We then conduct STFT of that CIR path with an 8s window size to obtain breathing rates. As shown in Figure~\ref{subfig:bpm}, with the \sysname-enabled NWB eCSI, we can estimate the breathing rates of multiple persons accurately and robustly. 
In comparison, using the physical CSI of 20 MHz fails to separate the multiple users, as illustrated in Figure~\ref{subfig:cirgt}.
The direct path and the reflected paths are inseparable in the CIR due to the limited resolution. 
As depicted in Figure~\ref{subfig:bigcir}, we can observe that the eCIR, the counterpart of eCSI, captures finer-grained reflection paths from different individuals, enabling the breathing rate estimation of each user separately. 
We the increased multipath resolution, at zero cost, we believe \sysname would enable a wide range of applications beyond breathing monitoring.

\section{Discussion}
\label{sec:discussion}

Wukong conceptualizes and pioneers the idea of Neuro-Wideband (NWB) WiFi sensing, introducing a new paradigm for expanding physical bandwidth using a single narrowband CSI measurement. While our work takes an important first step, several limitations remain and offer opportunities for future enhancement.

First, due to the lack of tools capable of directly measuring true ultra-wideband (UWB) WiFi CSI on commodity hardware, we rely on UWB (e.g., DW1000) data to validate extrapolation at higher bandwidths ($\geq$500 MHz). Although Wukong is designed to extrapolate to arbitrarily large bandwidths in principle, our current validation is physically grounded only up to 160 MHz on WiFi platforms. We thus conservatively refrain from claiming full UWB capability, and leave the integration of true WiFi-based UWB ground truth as important future work.

Second, while Wukong significantly improves time-of-flight (ToF) resolution and multipath separation, ToF accuracy remains susceptible to phase calibration errors and synchronization offsets, which are prevalent in real-world deployments~\cite{xiong2015tonetrack}. Future efforts will focus on jointly modeling phase offsets and hardware imperfections during eCSI inference to enhance localization accuracy under practical conditions.

Third, although we demonstrate the utility of Wukong-generated eCSI in localization and multi-user vital sign monitoring, we believe NWB sensing has broader implications. The underlying concept of frequency-aware extrapolation may extend naturally to other domains such as acoustic sensing, sub-6GHz radar, or mmWave communication. We plan to explore these cross-domain applications in future research.

Finally, while our training procedure enables self-supervised learning without requiring wideband ground-truth labels, model performance can benefit from greater diversity in environments and channel conditions. Scaling up the training corpus across domains (e.g., outdoor, mobile, NLOS) will further enhance generalization and robustness.

In summary, we view Wukong as a foundational step toward enabling high-resolution, real-time WiFi sensing using only standard hardware. We envision this work opening the door to a new era of eCSI-based sensing across diverse environments and applications.

\section{Related Works}
\label{sec:related_works}
\subsection{Improving Resolution for WiFi sensing}
The sensing resolution of WiFi is dependent on the number of antennas (for AoA) and the transmission frequency bandwidth (for ToF), respectively. Increasing the number of antennas and radios results in additional hardware costs~\cite{xiong2013arraytrack}. Recent research has introduced concepts such as virtual antenna arrays~\cite{kotaru2015spotfi,wu2019rf} and larger bandwidth through channel hopping~\cite{vasisht2016decimeter}. However, these methods require a static channel period and may impose constraints on ongoing data communication, which NWB aims to avoid. In addition, some works attempt to jointly estimate AoA and ToF~\cite{xie2019md,kotaru2015spotfi}. Yet, these methods often require calibration and are inherently limited in their design, resulting in high computational complexity and less scalability. In the most recent work, researchers use fast channel hopping to sample 20 CSIs and get finger-grain AoA-ToF tuples~\cite{li2024uwb}.   In contrast to previous approaches, NWB innovatively explores and exploits the CSI without changing existing hardware and communication configurations.

\subsection{Deep Generation-based RF Sensing}
In recent years, many generative models used in the field of RF sensing~\cite{yang2022deep},including localization~\cite{rizk2019effectiveness}, liquid classification~\cite{ha2020food}, gait recognition~\cite{yang2023xgait}, \textit{etc}.
Nevertheless, these models rely on environment-dependent features and generate samples in a specific environment, or even need transfer learning to apply the proposed method for othedownstream tasks~\cite{ha2020food}. The generalization ability to various unseen environments needs further exploration as well. 
These shortcomings greatly constrain their application in real-world scenarios. In addition, they are designed for a specific task and do not generalize to other sensing applications.
On the other hand, recently there has been a new trend of research work on directly generating signals. The recently proposed NeRF2 [68] learns the properties of the signal propagation
space based on NeRF. However, NeRF2 is limited to specific static scenarios and degrades for dynamic real-world scenarios. 
RF-Diffusion~\cite{chi2024rf} is the first work that uses time-frequency diffusion for both WiFi and mmWave signals, which boosts the accuracy of wireless sensing systems. Despite the success of generating RF data samples, none of them can improve the resolution of RF signals, not to mention the generalize towards distinct environments, devices or even protocols. 
Unlike previous endeavors, \sysname does not depend on the generative model to directly generate RF signals. Instead, it creatively utilizes the generative model to produce noise for self-conditioned extrapolation learning. As a result, for the first time, one can enhance RF sensing from the perspective of increasing bandwidth, enabling high-quality RF sensing across various applications.

\head{Channel Estimation}
Our work overlaps with the scope of physical channel estimation, a long-standing challenge in wireless communication. Traditional feedback-based approaches often suffer from excessive overhead~\cite{guo2022overview,guo2024deep}. More recent efforts~\cite{vasisht_eliminating_2016,bakshi2019fast,liu_fire_2021} attempt to predict unmeasured CSI (e.g., downlink) from measured CSI (e.g., uplink) based on assumptions like spatial reciprocity.
R2F2~\cite{bakshi2019fast} estimates multipath parameters via maximum likelihood but requires multiple-antenna devices and scales poorly with increasing path complexity. OptML and similar methods focus on mapping between different links or timestamps, but are limited by discontinuous frequency coverage and strong assumptions about channel stationarity. Additionally, FIRE~\cite{liu_fire_2021} highlights that even with deep learning, these methods fail to meet MIMO-grade accuracy.
More importantly, these methods are confined to predicting CSI at a different frequency and a different time, whereas our goal is to expand the continuous frequency band \textbf{at the same moment in time}, which is fundamentally different. While these methods could theoretically stitch predicted bands to form a quasi-wideband CSI, the resulting data is typically fragmented and lacks continuity.
In contrast, we formulate a new research problem: self-conditioned CSI extrapolation, where the goal is to expand the frequency coverage of a single physical CSI measurement without extra hardware or measurements. To the best of our knowledge, this is the first work to define and solve this problem systematically via a novel deep learning framework.

\section{Conclusion}
\label{sec:conclusion}

This paper presents \term{Neuro-Wideband (NWB)}, a novel paradigm for realizing wideband wireless sensing without requiring specialized hardware or multi-channel measurements. The core idea of NWB is to \emph{extrapolate} a physical CSI sample—measured using standard bandwidths on commodity WiFi—into an expanded, continuous-bandwidth representation, termed \term{eCSI}.

We introduce \sysname, a deep learning middleware that enables NWB by solving a self-conditioned extrapolation problem. By transferring sample-specific multipath characteristics embedded in the input CSI to unmeasured frequency components, \sysname generates high-fidelity eCSI across extended and continuous bandwidths. Importantly, it does so robustly across diverse environments and hardware platforms, without requiring retraining or calibration.

Extensive experiments demonstrate \sysname's ability to faithfully extrapolate eCSI, with physical validation up to 160 MHz. Case studies on localization and multi-person vital sign monitoring further highlight the practical benefits of eCSI-based sensing.

We believe NWB opens up a promising direction for bandwidth-constrained WiFi sensing, and that \sysname takes an important first step toward unlocking high-resolution, real-time sensing capabilities on commodity wireless systems.

\bibliographystyle{IEEEtran}
\bibliography{refs}

 




\end{document}